\newcommand{\eps}{\varepsilon}
\renewcommand{\b}[1]{\ensuremath{\mathbb{#1}}}
\newcommand{\EuD}{\ensuremath{\EuScript{D}}}
\newcommand{\EuA}{\ensuremath{\EuScript{A}}}
\newcommand{\EuI}{\ensuremath{\EuScript{I}}}
\newcommand{\EuR}{\ensuremath{\EuScript{R}}}
\newcommand{\EuP}{\ensuremath{\EuScript{P}}}
\newcommand{\EuZ}{\ensuremath{\EuScript{Z}}}
\newcommand{\EuM}{\ensuremath{\EuScript{M}}}
\newcommand{\EuC}{\ensuremath{\EuScript{C}}}
\newcommand{\EuS}{\ensuremath{\EuScript{S}}}
\newcommand{\EuQ}{\ensuremath{\EuScript{Q}}}
\title{An Efficient Algorithm for 2D Euclidean 2-Center with Outliers\thanks{This work is supported by NSF under grants CNS-05-40347, CFF-06-35000, and DEB-04-25465, by ARO grants W911NF-04-1-0278 and W911NF-07-1-0376, by an NIH grant 1P50-GM-08183-01, by a DOE grant OEGP200A070505, and by a grant from the U.S. Israel Binational Science Foundation.}}
\author{Pankaj K. Agarwal
\thanks{Department of Computer Science, Duke University, Durham, NC 27708: \texttt{pankaj@cs.duke.edu}} 
\and Jeff M. Phillips
\thanks{Department of Computer Science, Duke University, Durham, NC 27708: \texttt{jeffp@cs.duke.edu}}
}
\begin{document}
\maketitle

\begin{abstract}
For a set $P$ of $n$ points in $\b{R}^2$, the Euclidean $2$-center problem computes a pair of congruent disks of the minimal radius that cover $P$.  We extend this to the $(2,k)$-center problem where we compute the minimal radius pair of congruent disks to cover $n-k$ points of $P$.  
We present a randomized algorithm with $O(n k^7 \log^3 n)$ expected running time for the $(2,k)$-center problem.  
We also study the $(p,k)$-center problem in $\b{R}^2$ under the $\ell_\infty$-metric for $p=\{4,5\}$. 
We propose an $k^{O(1)} n\log n$ algorithm for computing a $\ell_\infty$ $(4,k)$-center and an $k^{O(1)} n \log^5 n$ algorithm for computing a $\ell_\infty$ $(5,k)$-center. 
\end{abstract}

\section{Introduction}
Let $P$ be a set of $n$ points in $\b{R}^2$.  For a pair of integers $0\leq k \leq n$ and $p \geq 1$, a family of $p$ congruent disks is called a \emph{$(p,k)$-center} if the disks cover at least $n-k$ points of $P$; $(p,0)$-center is the standard $p$-center.  The Euclidean $(p,k)$-center problems asks for computing a $(p,k)$-center of $P$ of the smallest radius.  In this paper we study the $(2,k)$-center problem.
We also study the $(p,k)$-center problem under the $\ell_\infty$-metric for small values of $p$ and $k$.  Here we wish to cover all but $k$ points of $P$ by $p$ congruent axis-aligned squares of the smallest side length.  Our goal is to develop algorithms whose running time is $n (k \log n)^{O(1)}$.

\paragraph{\textbf{\emph{Related work.}}}
There has been extensive work on the $p$-center problem in algorithms and operations research communities~\cite{AS98,DH02,Hoch95,CKMN01}.  
If $p$ is part of the input, the problem is NP-hard~\cite{MS84} even for the Euclidean case in $\b{R}^2$.  The Euclidean $1$-center problem is known to be LP-type~\cite{MSW96}, and therefore can be solved in linear time for any fixed dimension.  
The Euclidean $2$-center problem is not LP-type.  Agarwal and Sharir~\cite{AS94} proposed an $O(n^2 \log^3 n)$ time algorithm for the 2-center problem.  The running time was improved to $O(n \log^{O(1)} n)$ by Sharir~\cite{Sha97}.  The exponent of the $\log n$ factor was subsequently improved in \cite{Epp97,Chan99}.  The best known deterministic algorithm takes $O(n \log^2 n \log^2 \log n)$ time in the worst case, and the best known randomized algorithm takes $O(n \log^2 n)$ expected time.


There is little work on the $(p,k)$-center problem.
Using a framework described by Matou\v{s}ek~\cite{Mat95}, LP-type problems, with $k$ violations and basis size $3$, can be solved in $O(n \log k + k^3 n^\eps)$ time, for any $\eps > 0$.  This is improved by Chan~\cite{Cha05} to $O(n \beta(n) \log n + k^2 n^{\eps})$ expected time, where $\beta(\cdot)$ is a slow-growing inverse-Ackermann-like function and $\eps>0$.  
The $(1,k)$-center problem is LP-type with basis size $3$, so these bounds apply.  
Matou\v{s}ek~\cite{Mat95} also gives a more general results for LP-type problems with $k$ violations and with basis size $c$ that runs in $O(n k^c)$ time, if it is a feasible case where a solution with no violations exists.  In the infeasible case, no solution exists without violations and the algorithm runs in $O(n k^{c+1})$ time.  In fact, he shows in the feasible (resp. infeasible) case that there are $O(k^c)$ (resp. $O(k^{c+1})$) basis with at most $k$ violations and his algorithm visits all of them by a path of length $O(k^c)$ (resp. $O(k^{c+1})$) where consecutive basis in the path differ by inserting or deleting one constraint.  

The $p$-center problem under $\ell_\infty$-metric is dramatically simpler.  
Sharir and Welzl~\cite{SW96} show how to compute the $\ell_\infty$ $p$-center in near-linear time for $p \leq 5$.  In fact, they show that the rectilinear $2$- and $3$-center problems are LP-type problems and can be solved in $O(n)$ time.  Also, they show the 1-dimensional version of the problem is an LP-type problem for any $p$, with combinatorial dimension $O(p)$.  Thus applying Matou\v{s}ek's framework \cite{Mat95}, the $\ell_\infty$ $(p,k)$-center in $\b{R}^2$ for $p \leq 3$, can be found in $O(k^{O(1)} n)$ time and in $O(k^{O(p)} n)$, for any $p$, if the points lie in $\b{R}^1$.  


\paragraph{\textbf{\emph{Our results.}}}
Our main result is a randomized algorithm for the Euclidean $(2,k)$-center problem in $\b{R}^2$ whose expected running time is $O(nk^7 \log^3 n)$.  
We follow the general framework of Sharir and subsequent improvements by Eppstein.  
We first prove, in Section 2, a few  structural properties of levels in an arrangement of unit disks, which are of independent interest.  

As in \cite{Sha97,Epp97}, our solution breaks the $(2,k)$-center problem into two cases depending on the distance between the centers of the optimal disks; 
(i) the centers are further apart than the optimal radius, and 
(ii) they are closer than their radius.  The first subproblem, which we refer to as the \emph{well-separated case} and describe in Section 3, takes $O(k^6 n \log^3 n)$ time in the worst case and uses parametric search~\cite{Meg83}.  The second subproblem, which we refer to as the \emph{nearly concentric case} and describe in Section 4, takes $O(k^7 n \log^3 n)$ expected time.  Thus we solve the $(2,k)$-center problem in $O(k^7 n \log^3 n)$ expected time.  
We can solve the nearly concentric case and hence the $(2,k)$-center problem in $O(k^7 n^{1+\delta})$ deterministic time, for any $\delta>0$.  
We present near-linear algorithms for the $\ell_\infty$ $(p,k)$-center in $\b{R}^2$ for $p={4,5}$.  The $\ell_\infty$ $(4,k)$-center problem takes $O(k^{O(1)} n \log n)$ time, and the $\ell_\infty$ $(5,k)$-center problem takes $O(k^{O(1)} n \log^5 n)$ time.  
We have not made any attempt to minimize the exponent of $k$.  We believe that it can be improved by a more careful analysis.

\section{Arrangement of Unit Disks}
\label{sec:arrangementD}
Let $\EuD = \{D_1, \ldots, D_n\}$ be a set of $n$ unit disks in $\b{R}^2$.  Let $\EuA(\EuD)$ be the arrangement of $\EuD$.\footnote{The \emph{arrangement} of $\EuD$ is the planar decomposition induced by $\EuD$; its vertices are the intersection points of boundaries of two disks, its edges are the maximal portions of disk boundaries that do not contain a vertex, and its faces are the maximal connected regions of the plane that do not intersect the boundary of any disk.}  $\EuA(\EuD)$ consists of $O(n^2)$ vertices, edges, and faces.  For a subset $\EuR \subseteq \EuD$, let $\EuI(\EuR) = \bigcap_{D \in \EuR} D$ denote the intersection of disks in $\EuR$.  Each disk in $\EuR$ contributes at most one edge in $\EuI(\EuR)$.  We refer to $\EuI(\EuR)$ as a \emph{unit-disk polygon} and a connected portion of $\partial \EuI(\EuR)$ as a \emph{unit-disk curve}.  
We introduce the notion of a level in $\EuA(\EuD)$, prove a few structural properties of levels, and describe a procedure that will be useful for our overall algorithm.  

\paragraph{\textbf{\emph{Levels and their structural properties.}}}
For a point $x \in \b{R}^2$, the \emph{level} of $x$ with respect to $\EuD$, denoted by $\lambda(x,\EuD)$, is the number of disks in $\EuD$ that \emph{do not} contain $x$.  (Our definition of level is different from the more common definition in which it is defined as the number of disks whose interiors contain $x$.)  All points lying on an edge or face $\phi$ of $\EuA(\EuD)$ have the same level, which we denote by $\lambda(\phi)$.  For $k \leq n$, let $\EuA_k(\EuD)$ (resp. $\EuA_{\leq k}(\EuD)$) denote the set of points in $\b{R}^2$ whose level is $k$ (resp. at most $k$); see Figure \ref{fig:arrD}.  By definition, $\EuA_0(\EuD) = \EuA_{\leq 0}(\EuD) = \EuI(\EuD)$.  

The boundary of $\EuA_{\leq k}(\EuD)$ is composed of the edges of $\EuA(\EuD)$.  
Let $v \in \partial D_1 \cap \partial D_2$, for $D_1, D_2 \in \EuD$, be a vertex of $\partial \EuA_{\leq k}(\EuD)$.  We call $v$ \emph{convex} (resp. \emph{concave}) if $\EuA_{\leq k}(\EuD)$ lies in $D_1 \cap D_2$ (resp. $D_1 \cup D_2$) in a sufficiently small neighborhood of $v$; see Figure \ref{fig:arrD}(a).  $\partial \EuA_{\leq 0}(\EuD)$ is composed of convex vertices.
We define the complexity of $\EuA_{\leq k}(\EuD)$ to be the number of edges of $\EuA(\EuD)$ whose levels are at most $k$.  Since the complexity of $\EuA_{\leq 0}(\EuD)$ is $n$, the following lemma follows from the result by Clarkson and Shor~\cite{CS89} (see also Sharir~\cite{Sha91} and Chan~\cite{Cha07}).  

\begin{figure}
  \centering
  \includegraphics{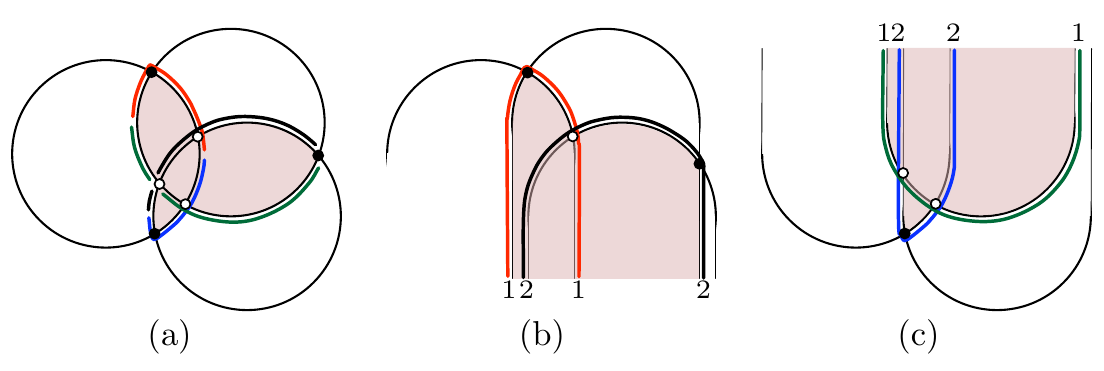}
\caption{\label{fig:arrD} 
\small (a) $\EuA(\EuD)$; shaded region is $\EuA_{\leq 1}(\EuD)$; filled (resp. hollow) vertices are convex (resp. concave) vertices of $\EuA_{\leq 1}(\EuD)$; covering of $\EuA_{\leq 1}(\EuD)$ edges by six unit-disk curves.  
(b) $\EuA(\Gamma^+)$; shaded region is $\EuA_{\leq 1}(\Gamma^+)$; and the covering of $\EuA_{\leq 1}(\Gamma^+)$ edges by two concave chains.
(c) $\EuA(\Gamma^-)$; shaded region is $\EuA_{\leq 1}(\Gamma^-)$; and the covering of $\EuA_{\leq 1}(\Gamma^-)$ edges by two convex chains.}
\end{figure}

\begin{lemma} \emph{\cite{CS89}} For $k \geq 0$, the complexity of $\EuA_{\leq k}(\EuD)$ is $O(nk)$.  
\label{lem:CS-nk}
\end{lemma}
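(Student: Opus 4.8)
The plan is to invoke the Clarkson--Shor random-sampling framework directly, treating the edges of $\EuA_{\leq k}(\EuD)$ as ``configurations'' with a well-defined notion of conflict. First I would set up the abstract framework: a configuration is defined by a constant number of disks (here, an edge of $\EuA(\EuD)$ is determined by a single disk $D_i$ whose boundary contains it, together with the two disks whose boundaries delimit its endpoints, so the defining set has size at most three), and the \emph{conflict set} of a configuration is the set of disks in $\EuD\setminus\{\text{defining disks}\}$ that \emph{do not} contain an interior point of the edge. With this definition, an edge appears in $\EuA_{\leq k}(\EuD)$ precisely when its level is at most $k$, i.e.\ when its conflict set has size at most $k$. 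The quantity we wish to bound is therefore exactly the number of configurations with at most $k$ conflicts.

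Next I would apply the Clarkson--Shor bound on the number of configurations with at most $k$ conflicts, which states that this number is $O(k^{d} \cdot f(n/k))$, where $d$ is the size of the defining set (here a constant, $d\le 3$) and $f(r)$ is the expected number of configurations with \emph{zero} conflicts in a random sample of $r$ disks. Concretely, the theorem reduces the ``at most $k$'' count to the ``exactly zero'' count: if the complexity of $\EuA_{\leq 0}(\EuR)=\EuI(\EuR)$ is $O(|\EuR|)$ for every subset $\EuR\subseteq\EuD$, then the complexity of $\EuA_{\leq k}(\EuD)$ is $O(nk)$. The key input is precisely the fact already established in the excerpt: each disk contributes at most one edge to $\EuI(\EuR)$, so the complexity of $\EuA_{\leq 0}(\EuR)$ is $O(|\EuR|)$, i.e.\ $f(r)=O(r)$ with the leading constant independent of $r$.

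I would then assemble the arithmetic. Taking a random sample $\EuR$ of size $r$, the expected number of level-$0$ edges is $O(r)$; the Clarkson--Shor machinery multiplies this by the appropriate factor and, after choosing $r\approx n/k$, yields that the number of edges at level at most $k$ is $O\!\left((n/k)\cdot k\cdot\text{(const)}\right)$. A cleaner way to phrase the same computation: the number of edges at level exactly $j$ summed over $j=0,\dots,k$ is bounded using the standard Clarkson--Shor sampling inequality, and because the zero-level complexity grows only linearly in the sample size, each level contributes $O(n)$ amortized across the sum in a way that telescopes to $O(nk)$. The main obstacle, and the only place requiring genuine care, is verifying that the notion of conflict here (non-containment of an edge by a disk) matches the abstract framework's hypotheses—specifically that an edge's presence at a given level is governed monotonically by its conflict count, and that the defining set really has bounded size even at the transition vertices where two boundary arcs meet. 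Once that correspondence is pinned down, the lemma follows immediately from the cited Clarkson--Shor result with no further computation, which is exactly why the excerpt states it as a direct consequence.
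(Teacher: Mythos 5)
Your overall strategy---invoking the Clarkson--Shor framework seeded by the linear complexity of the zero level---is exactly what the paper intends (it gives no proof beyond the citation), but your execution has a genuine flaw in the one place where care is needed: the choice of configurations. You take the configurations to be \emph{edges}, with defining sets of size up to $d=3$. The Clarkson--Shor inequality you quote then gives a bound of $O\bigl(k^{d}\cdot f(n/k)\bigr) = O\bigl(k^{3}\cdot (n/k)\bigr) = O(nk^{2})$, not $O(nk)$. The arithmetic you actually write, $O\bigl((n/k)\cdot k\bigr)=O(n)$, is inconsistent with your own formula and cannot be a correct bound in any case, since $\EuA_{\leq k}(\EuD)$ genuinely has $\Theta(nk)$ complexity in the worst case. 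The fallback phrasing (``each level contributes $O(n)$ amortized \ldots telescopes to $O(nk)$'') is circular: a \emph{single} level of a disk arrangement can have superlinear complexity, so the per-level $O(n)$ claim holds only in the amortized sense that is precisely the content of the lemma being proved.

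The fix is to run Clarkson--Shor on the \emph{vertices}, not the edges. A vertex $v\in\partial D_i\cap\partial D_j$ has defining set $\{D_i,D_j\}$ of size exactly $2$, its conflict set is the set of disks not containing $v$ (so conflict count $=$ level, matching the paper's definition), and $v$ is a vertex of $\EuI(\EuR)$ for a sample $\EuR$ precisely when $D_i,D_j\in\EuR$ and no conflicting disk is in $\EuR$. Since each disk contributes at most one edge to $\EuI(\EuR)$, a sample of size $r$ has at most $r$ level-zero vertices, so the number of vertices at level at most $k$ is $O\bigl(k^{2}\cdot (n/k)\bigr)=O(nk)$. Then transfer the bound to edges: every edge at level at most $k$ has its endpoints at level at most $k$ (an endpoint lies in both closed disks whose boundaries meet there, so its level cannot exceed that of the incident edge), each vertex is incident to at most four edges, and edges with no endpoints are full circles, of which there are at most $n$. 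This yields the claimed $O(nk)$ bound on the number of edges; your edge-based setup, even if all details were pinned down as you describe, loses a factor of $k$.
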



\begin{remark}
The argument by Clarkson and Shor can also be used to prove that $\EuA_{\leq k}(\EuD)$ has $O(k^2)$ connected components and that it has $O(k^2)$ local minima in $(+y)$-direction.  See also \cite{Cla93,Mat95}.  These bounds are tight in the worst case; see Figure \ref{fig:lb-k2}.
\end{remark}

\vspace{.1in}
It is well known that the edges in the ${\le}k$-level of a line arrangement can be covered by $k+1$ concave chains~\cite{Gus79}, as used in \cite{Dey98,Cha05}.  We prove a similar result for $\EuA_{\leq k}(\EuD)$; it can be covered by $O(k)$ unit-disk curves.  

For a disk $D_i$, let $\gamma_i^+$ (resp. $\gamma_i^-$) denote the set of points that lie in or below (resp. above) $D_i$; $\partial \gamma_i^+$ consists of the upper semicircle of $\partial D_i$ plus two vertical downward rays emanating from the left and right endpoints of the semicircle --- we refer to these rays as left and right rays.  The curve $\partial \gamma_i^-$ has a similar structure.  See Figures \ref{fig:arrD}(b) and (c).  Set $\Gamma^+ = \{\gamma_i^+ \mid 1 \leq i \leq n\}$ and $\Gamma^- = \{\gamma_i^- \mid 1 \leq i \leq n\}$.  Assuming that the $x$-coordinates of the centers of all disks in $\EuD$ are distinct, each pair of curves $\partial \gamma_i^+, \partial \gamma_j^+$ intersect in at most one point.  (If we assume that the left and right rays are not vertical but have very large positive and negative slopes, respectively, then each pair of boundary curves intersects in exactly one point.)  We define the level of a point with respect to $\Gamma^+$, $\Gamma^-$, or $\Gamma^+ \cup \Gamma^-$ in the same way as with respect to $\EuD$.  A point lies in a disk $D_i$ if and only if it lies in both $\gamma_i^+$ and $\gamma_i^-$, so we obtain the following inequalities:
\begin{equation}
\max \{\lambda(x, \Gamma^+), \lambda(x,\Gamma^-) \} \leq \lambda(x,\EuD).
\label{eq:gam-leq-D}
\end{equation}
\begin{equation}
\lambda(x,\EuD) \leq \lambda(x, \Gamma^+ \cup \Gamma^-) \leq 2 \lambda(x,\EuD).
\label{eq:cup-D}
\end{equation}

We cover the edges of $\EuA_{\leq k}(\Gamma^+)$ by $k+1$ concave chains as follows.  The level of the $(k+1)$st rightmost left ray is at most $k$ at $y=-\infty$.  Let $\rho_i$ be such a ray, belonging to $\gamma_i^+$.  We trace $\partial \gamma_i^+$, beginning from the point at $y=-\infty$ on $\rho_i$, as long as $\partial \gamma_i^+$ remains in $\EuA_{\leq k}(\Gamma^+)$.  We stop when we have reached a vertex $v \in \EuA_{\leq k}(\Gamma^+)$ at which it leaves $\EuA_{\leq k}(\Gamma^+)$; $v$ is a convex vertex on $\EuA_{\leq k}(\Gamma^+)$.  Suppose $v = \partial \gamma_i^+ \cap \partial \gamma_j^+$.  Then $\partial \EuA_{\leq k}(\Gamma^+)$ follows $\partial \gamma_j^+$ immediately to the right of $v$, so we switch to $\partial \gamma_j^+$ and repeat the same process.  It can be checked that we finally reach $y=-\infty$ on a right ray.  Since we always switch the curve on a convex vertex, the chain $\Lambda_i^+$ we trace is a concave chain composed of a left ray, followed by a unit-disk curve $\xi_i^+$, and then followed by a right ray.  Let $\Lambda_0^+, \Lambda_1^+, \ldots, \Lambda_k^+$ be the $k+1$ chains traversed by this procedure.  These chains cover all edges of $\EuA_{\leq k}(\Gamma^+)$, and each edge lies exactly on one chain.  Similarly we cover the edges of $\EuA_{\leq k}(\Gamma^-)$ by $k+1$ convex curves $\Lambda_0^-, \Lambda_1^-, \ldots, \Lambda_k^-$.  Let $\Xi = \{\xi_0^+, \ldots, \xi_k^+, \xi_0^-, \ldots, \xi_k^-\}$ be the family of unit-disk curves induced by these convex and concave chains.  By (\ref{eq:gam-leq-D}), $\Xi$ covers all edges of $\EuA_{\leq k}(\EuD)$.  
Since a unit circle intersects a unit-disk curve in at most two points, we conclude the following.

\begin{lemma}
The edges of $\EuA_{\leq k}(\EuD)$ can be covered by at most $2k+2$ unit-disk curves, and a unit circle intersects $O(k)$ edges of $\EuA_{\leq k}(\EuD)$.  
\label{lem:k-udc}
\end{lemma}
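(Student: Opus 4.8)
The plan is to read off the first assertion almost directly from the covering procedure constructed just above the statement, and then to obtain the second assertion from a single geometric fact about unit circles. For the covering, the tracing procedure already produces the family $\Xi = \{\xi_0^+,\dots,\xi_k^+,\xi_0^-,\dots,\xi_k^-\}$ of exactly $2k+2$ unit-disk curves, so I only need to check that it covers every edge of $\EuA_{\leq k}(\EuD)$. An edge $e$ lies on some $\partial D_i$, and its relative interior lies either on the upper semicircle of $\partial D_i$ (hence on $\partial\gamma_i^+$) or on the lower semicircle (hence on $\partial\gamma_i^-$). In the first case, (\ref{eq:gam-leq-D}) gives $\lambda(x,\Gamma^+)\leq\lambda(x,\EuD)\leq k$ for every $x\in e$, so $e$ is an edge of $\EuA_{\leq k}(\Gamma^+)$ and the procedure assigns it to one of the concave chains $\Lambda_j^+$, i.e.\ to $\xi_j^+$; the second case is symmetric. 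Thus $2k+2$ unit-disk curves suffice.

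The heart of the argument is the claim, stated just before the lemma, that a unit circle $C$ meets a unit-disk curve in at most two points. Since a unit-disk curve is a connected part of $\partial\EuI(\EuR)$ and $\EuI(\EuR)$ is convex, it suffices to show that $C\cap\EuI(\EuR)$ is connected, for then $C$ crosses $\partial\EuI(\EuR)$ at most twice. Write $C=\partial D$ for a unit disk $D$ centered at $c$, and parametrize $C$ by the angle $\theta$. For a disk $D_i\in\EuR$ centered at $c_i$, a short computation shows the portion of $C$ lying in $D_i$ is the arc $\{\theta : \cos(\theta-\phi_i)\leq -|c-c_i|/2\}$, where $\phi_i$ is the direction of $c-c_i$; because the right-hand side is nonpositive, this arc spans \emph{at most a semicircle}. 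Now $C\cap\EuI(\EuR)$ is the intersection of these arcs, each of angular length at most $\pi$. A short counting argument shows such an intersection cannot split into two disjoint nondegenerate arcs: if it did, two of the defining arcs would each have to run ``the long way'' around the circle, forcing their combined length to exceed $2\pi$, a contradiction. Hence $C\cap\EuI(\EuR)$ is a single arc and $C$ meets any unit-disk curve on $\partial\EuI(\EuR)$ in at most two points.

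Combining the two parts, every edge of $\EuA_{\leq k}(\EuD)$ lies on one of the $2k+2$ curves of $\Xi$, a unit circle crosses each such curve in at most two points, and each crossing point meets the relative interior of at most one edge; therefore a unit circle meets at most $2(2k+2)=O(k)$ edges of $\EuA_{\leq k}(\EuD)$. I expect the only real obstacle to be the geometric claim, since the covering is mere bookkeeping over the construction already in place. The equal-radius hypothesis is what makes the claim go through: it is precisely what forces each cap $C\cap D_i$ to be at most a semicircle, which in turn drives the connectivity of $C\cap\EuI(\EuR)$.
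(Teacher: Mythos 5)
Your proposal is correct and takes essentially the same route as the paper: the covering by the $2k+2$ curves of $\Xi$ is exactly the chain construction preceding the lemma combined with inequality (\ref{eq:gam-leq-D}), and the $O(k)$ bound follows by charging each intersected edge to one of the at most two intersection points of the unit circle with the curve covering that edge. The only difference is that you actually prove the geometric fact that a unit circle meets a unit-disk curve in at most two points (each unit disk cuts off an arc of at most a semicircle, so the circle's intersection with $\EuI(\EuR)$ is a single arc), whereas the paper asserts it without proof; your argument for this fact is sound.
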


The curves in $\Xi$ may contain edges of $\EuA(\EuD)$ whose levels are greater that $k$.  If we wish to find a family of unit-disk curves whose union is the set of edges in $\EuA_{\leq k}(\EuD)$, we proceed as follows.  
We add the $x$-extremal points of each disk as vertices of $\EuA(\EuD)$, so each edge is now $x$-monotone and lies in a lower or an upper semicircle.  By (\ref{eq:gam-leq-D}), only $O(k)$ such vertices lie in $\EuA_{\leq k}(\EuD)$.  
We call a vertex of $\EuA_{\leq k}(\EuD)$ \emph{extremal} if it is an $x$-extremal point on a disk or an intersection point of a lower and an upper semicircle.  An extremal vertex of the latter type is an intersection point of $\xi_i^+, \xi_i^- \in \Xi$.  Since each such pair intersects in at most two points, there are $O(k^2)$ extremal vertices.  
For each extremal vertex $v$ we do the following.  If there is an edge $e$ of $\EuA_{\leq k}(\EuD)$ lying to the right of $v$, we follow the arc containing $e$ until we reach an extremal vertex or we leave $\EuA_{\leq k}(\EuD)$.  In the former case we stop.  In the latter 
case we are at a convex vertex $v^\prime$ of $\partial \EuA_{\leq k}(\EuD)$, and we switch to the other arc incident on $v^\prime$ and continue.  These curves have been drawn in Figure \ref{fig:arrD}(a).  This procedure returns an $x$-monotone unit-disk curve that lies in $\EuA_{\leq k}(\EuD)$.  It can be shown that this procedure covers all edges of $\EuA_{\leq k}(\EuD)$.  
If $\EuA_{\leq k}(\EuD)$ is represented as a planar graph, we can compute these curves in time proportional to the number of edges in $\EuA_{\leq k}(\EuD)$.  
We thus obtain the following:

\begin{lemma}
Let $\EuD$ be a set of $n$ unit disks in $\b{R}^2$.  Given $\EuA_{\leq k}(\EuD)$, we can compute, in time $O(nk)$, a family of $O(k^2)$ $x$-monotone unit-disk curves whose union is the set of edges of $\EuA_{\leq k}(\EuD)$.  
\label{lem:udcs-k2}
\end{lemma}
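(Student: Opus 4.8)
The plan is to verify that the tracing procedure sketched above does what is claimed: it outputs a collection of $x$-monotone \tcs{} that lie in $\EuA_{\leq k}(\EuD)$, that these collectively cover every edge of $\EuA_{\leq k}(\EuD)$ and nothing more, that there are $O(k^2)$ of them, and that the whole computation runs in $O(nk)$ time. I would organize the argument around the \emph{extremal vertices}, since they serve both as the seeds of the traces and as the endpoints of the curves produced, so I would first pin down their number. The $x$-extremal points of disks lying in $\EuA_{\leq k}(\EuD)$ are the origins of the left/right rays of $\Gamma^+$ and $\Gamma^-$ at level at most $k$; by~(\ref{eq:gam-leq-D}) a point of level ${\le}k$ with respect to $\EuD$ has level ${\le}k$ with respect to each of $\Gamma^+,\Gamma^-$, so there are only $O(k)$ of them. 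For an intersection of an upper with a lower semicircle, I would note that the upper arc lies on one of the $k+1$ concave curves $\xi_0^+,\dots,\xi_k^+$ and the lower arc on one of the $k+1$ convex curves $\xi_0^-,\dots,\xi_k^-$; a concave \tc{} and a convex one, both $x$-monotone, cross in at most two points, so summing over the $(k+1)^2$ pairs yields $O(k^2)$ such vertices. Hence there are $O(k^2)$ extremal vertices in all.

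Next I would argue that each traced curve is genuinely an $x$-monotone \tc{} contained in $\EuA_{\leq k}(\EuD)$. Monotonicity is immediate once the disks are split at their $x$-extremal points: each arc we follow is a portion of a single upper or lower semicircle, hence $x$-monotone, and every switch happens at a convex vertex $v'$, which is not an $x$-extremal point of either disk, so we enter and leave $v'$ moving in the $(+x)$-direction. That the concatenation is a genuine \tc{} is exactly the payoff of switching only at convex vertices: at a convex vertex $v'=\partial D_1\cap\partial D_2$ the region $\EuA_{\leq k}(\EuD)$ lies locally in $D_1\cap D_2$, so near $v'$ the two incident arcs bound $\partial\EuI(\{D_1,D_2\})$, and following the one that keeps us inside $\EuA_{\leq k}(\EuD)$ reproduces locally the boundary of an intersection of disks. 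Since we halt as soon as we would exit $\EuA_{\leq k}(\EuD)$, the trace never leaves the ${\le}k$ region, so the union of the traces contains no edge of level greater than $k$ --- this is precisely what upgrades the statement from the covering of Lemma~\ref{lem:k-udc} to an \emph{exact} description of the edge set.

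The main obstacle is the coverage claim: every edge of $\EuA_{\leq k}(\EuD)$ lies on exactly one traced curve. I would prove this by identifying the traces with the maximal $x$-monotone \tcs{} contained in $\EuA_{\leq k}(\EuD)$ and showing that the left endpoint of each such maximal curve is an extremal vertex. Fix an edge $e$ and trace the curve through it leftward, switching at convex vertices as before; because the arrangement is bounded in $x$, this walk must terminate, and it can terminate only where $x$-monotone continuation as a \tc{} inside $\EuA_{\leq k}(\EuD)$ becomes impossible. The only such stopping points are (i) an $x$-extremal point of a disk, where the curve would have to reverse its $x$-direction, and (ii) a meeting of an upper and a lower semicircle, where the opposing local convexity forbids continuing as the boundary of a disk intersection; both are extremal vertices. (At a concave vertex the walk does \emph{not} stop, since there the curve simply stays on the same disk, and at a convex vertex it switches rather than stops.) Thus the leftward walk from $e$ ends at an extremal vertex $v$, and the rightward trace initiated at $v$ is exactly the maximal curve through $e$; uniqueness follows because the maximal $x$-monotone \tc{} through an edge is determined by that edge. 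Counting is then routine: each extremal vertex seeds $O(1)$ rightward traces, so the family has $O(k^2)$ curves. For the running time, Lemma~\ref{lem:CS-nk} bounds the complexity of $\EuA_{\leq k}(\EuD)$ by $O(nk)$; given its planar-graph representation, I would locate all extremal vertices and perform all traces while visiting each edge $O(1)$ times, for $O(nk)$ time in total.
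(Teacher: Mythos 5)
Your proposal is correct and takes essentially the same route as the paper: the same notion and counting of extremal vertices ($O(k)$ $x$-extremal points via~(\ref{eq:gam-leq-D}) plus $O(k^2)$ crossings of the concave curves $\xi_i^+$ with the convex curves $\xi_j^-$), the same rightward tracing that switches arcs at convex vertices of $\partial\EuA_{\leq k}(\EuD)$ and halts at extremal vertices, and the same $O(nk)$ traversal of the planar-graph representation. Your leftward-walk argument for coverage in fact fills in the step the paper dismisses with ``it can be shown that this procedure covers all edges,'' so it is a welcome elaboration rather than a departure.
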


\begin{figure}
  \centering 
  \includegraphics[width=.3\linewidth]{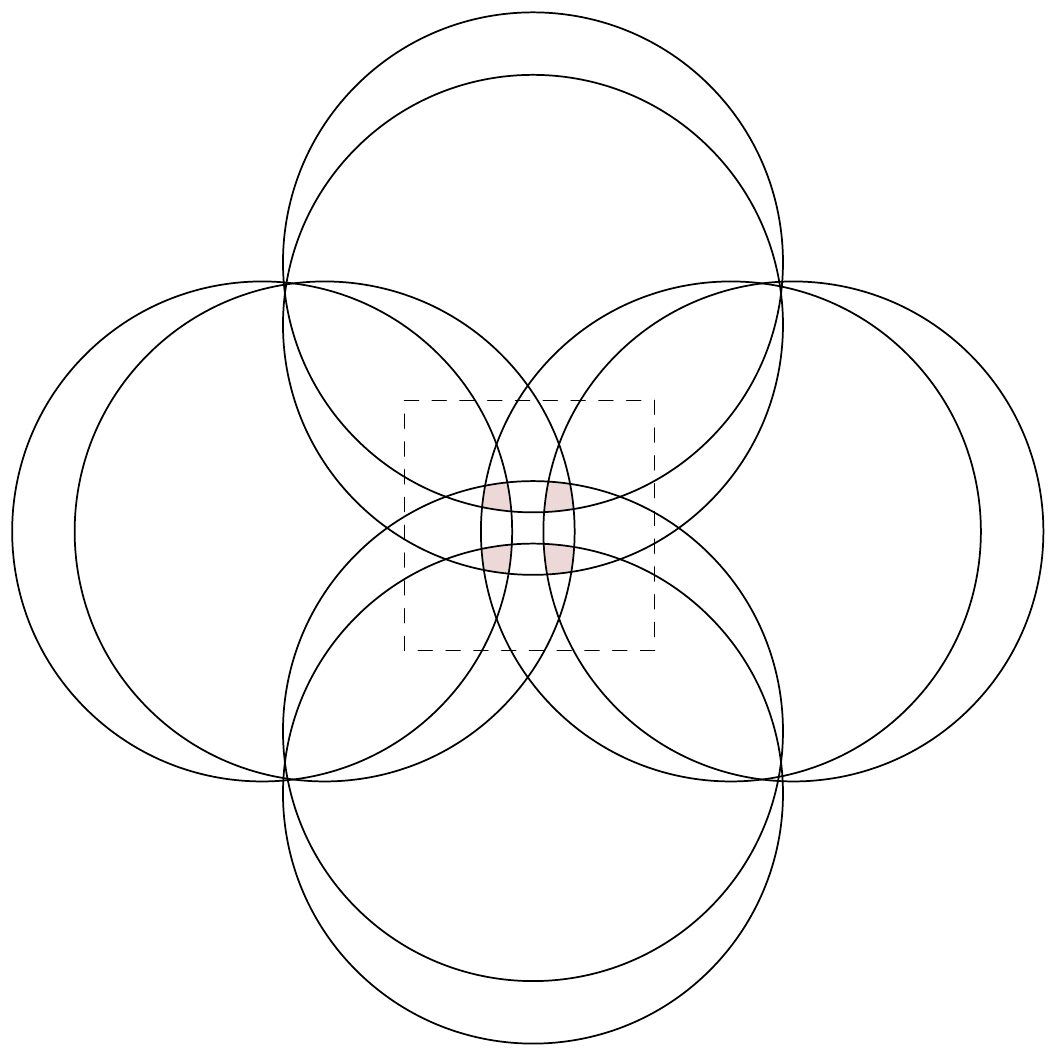}
  \hspace{.5in}
  \includegraphics[width=.3\linewidth]{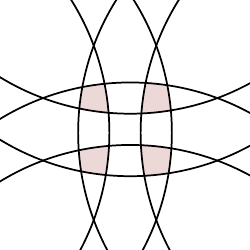}
\caption{\label{fig:lb-k2} 
Lower bound.  $\EuA_{\leq 2}(\EuD)$ (shaded region) has $4$ connected components.  The right image is zoomed in of the center of the left image.}
\end{figure}

\begin{remark}
Since $\EuA_{\leq k}(\EuD)$ can consist of $\Omega(k^2)$ connected components, the $O(k^2)$ bound is tight in the worst case; see Figure \ref{fig:lb-k2}.
\end{remark}

\paragraph{\textbf{\emph{Dynamic Data Structures.}}} 
We need a dynamic data structure for storing a set $\EuD$ of unit disks that supports the following two operations:
\begin{itemize}
\item{(O1)} Insert a disk into $\EuD$ or delete a disk from $\EuD$;
\item{(O2)} For a given $k$, determine whether $\EuA_{\leq k}(\EuD) \neq \emptyset$.
\end{itemize}

Hershberger and Suri~\cite{HS91}, describe how to maintain $\EuI(\EuD)$ under insertion/deletion in $O(\log n)$ time per update and how to find the point in $\EuI(\EuD)$ with the smallest $y$-coordinate in $O(\log n)$ time.
We use this in conjunction with Matou\v{s}ek's algorithm~\cite{Mat95} for visiting all basis of an LP-type problem with at most $k$ violations.   Specifically we examine the LP-type problem of finding the smallest $y$-coordinate of $\EuI(\EuD)$ with $k$ violations, which has a basis size of $2$ and may be infeasible.  Thus the path to visit all basis is of length $O(k^3)$ and using Hershberger and Suri's data structure we traverse each step of the path in $O(\log n)$ time by inserting or deleting a constraint and finding the discs defining the minimal $y$-coordinate.  

\begin{lemma}
There exists a dynamic data structure for storing a set of $n$ unit disks so that \emph{(O1)} can be performed in $O(\log n)$ time, and \emph{(O2)} takes $O(k^3 \log n)$ time.  
\label{lem:dynD}
\end{lemma}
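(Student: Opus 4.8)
The plan is to combine the dynamic convex-intersection structure of Hershberger and Suri with Matou\v{s}ek's machinery for LP-type problems with $k$ violations. Operation (O1) comes for free: the Hershberger--Suri structure maintains the intersection $\EuI(\EuD)$ of a set of unit disks under insertions and deletions in $O(\log n)$ time per update, which is exactly (O1). The same structure reports the lowest point (smallest $y$-coordinate) of $\EuI(\EuD)$ in $O(\log n)$ time and, crucially, returns the at most two disks whose boundaries pin that lowest point.

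For (O2) I would recast the emptiness query as an LP-type problem with violations. By the definition of level, a point $x$ satisfies $\lambda(x,\EuD)\le k$ precisely when $x$ lies in all but at most $k$ disks; hence $\EuA_{\le k}(\EuD)\neq\emptyset$ if and only if some subfamily obtained by discarding at most $k$ disks has nonempty intersection. I therefore consider the LP-type problem whose constraints are the disks and whose objective $w(\EuR)$ is the smallest $y$-coordinate of $\EuI(\EuR)$, set to $+\infty$ when $\EuI(\EuR)=\emptyset$. Discarding a disk is exactly violating a constraint, so the problem admits a solution with at most $k$ violations (a finite objective) iff $\EuA_{\le k}(\EuD)\neq\emptyset$. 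This objective is monotone and satisfies locality, its combinatorial dimension (basis size) is $c=2$ since the lowest point of an intersection of disks is determined by at most two disk boundaries, and it may be infeasible, as the $n$ disks need not share a common point.

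Next I would invoke Matou\v{s}ek's traversal: in the infeasible case with basis size $c=2$, there are $O(k^{3})$ bases having at most $k$ violations, and they can be visited along a path in which consecutive bases differ by the insertion or deletion of a single constraint. Walking this path, I realize each step by one insert/delete into the Hershberger--Suri structure followed by one lowest-point query, each costing $O(\log n)$, and I test whether any visited basis yields a finite objective with at most $k$ violations. If so, I report $\EuA_{\le k}(\EuD)\neq\emptyset$; otherwise it is empty. Summing over the $O(k^{3})$ steps gives the claimed $O(k^{3}\log n)$ bound for (O2).

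The main point to get right is the reduction itself, not the running-time bookkeeping, since both Hershberger--Suri and Matou\v{s}ek are cited as black boxes. I expect the crux to be verifying that ``level at most $k$'' coincides exactly with ``at most $k$ violated disk constraints,'' that the lowest-$y$ objective genuinely forms an LP-type problem of basis size $2$ (in particular that monotonicity and locality hold and that a basis has size at most two), and that we are in the infeasible regime, which is what yields the $O(k^{3})$ rather than the $O(k^{2})$ bound. Once this correspondence is pinned down, the time bound follows immediately from the two cited primitives.
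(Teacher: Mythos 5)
Your proposal matches the paper's proof essentially verbatim: the paper also combines the Hershberger--Suri structure (for $O(\log n)$ updates and lowest-point queries on $\EuI(\EuD)$) with Matou\v{s}ek's traversal of the $O(k^3)$ bases of the infeasible, basis-size-$2$ LP-type problem ``minimize the $y$-coordinate over $\EuI(\EuR)$,'' spending $O(\log n)$ per step of the path. Your added verification that ``level at most $k$'' coincides with ``at most $k$ violated disk constraints'' is a correct and welcome elaboration of a point the paper leaves implicit.
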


Agarwal and Matou\v{s}ek~\cite{AM95} provide a data structure that can maintain the value of the radius of the smallest enclosing disk under insertions and deletions in $O(n^{\delta})$ time per update, for any $\delta>0$.  
We combine this with Matou\v{s}ek's algorithm for LP-type problems, specifically for the $(1,k)$-center problem.  Similar to the above data structure, the algorithm determines a path of length $O(k^3)$ to traverse all basis with at most $k$ violations, and each is traversed in $O(n^{\delta})$ time by handling an insertion or deletion using Agarwal and Matou\v{s}ek's data structure.  

\begin{lemma}
The exists a dynamic data structure for a set of $n$ points such that under insertion/deletion of a point, it can return the answer to the $(1,k)$-center problem in $O(k^3 n^\delta)$, for any $\delta>0$.  
\label{lem:dynC}
\end{lemma}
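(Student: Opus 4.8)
The plan is to combine the dynamic smallest-enclosing-disk structure of Agarwal and Matou\v{s}ek~\cite{AM95} with Matou\v{s}ek's traversal scheme for LP-type problems with $k$ violations~\cite{Mat95}, in exact analogy to the proof of Lemma~\ref{lem:dynD}. First I would recall that the $(1,k)$-center problem is an LP-type problem whose underlying $(1,0)$-center (smallest enclosing disk) has combinatorial dimension $c=3$, whose objective (the radius) is always finite, and in which each input point is a constraint requiring it to lie inside the disk. Since a smallest enclosing disk exists for every point set, this is the \emph{feasible} case of Matou\v{s}ek's framework, so the relevant bound on the number of bases with at most $k$ violations is $O(k^c)=O(k^3)$, and they can be visited along a path of length $O(k^3)$.

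Next I would set up the two layers. The outer layer stores the current point set $P$ in the Agarwal--Matou\v{s}ek structure, which supports insertion or deletion of a single point in $O(n^\delta)$ time and, after each such operation, reports the radius of the smallest enclosing disk of the points it currently holds. The inner layer is Matou\v{s}ek's algorithm, which enumerates all $O(k^3)$ candidate bases of the $(1,k)$-center by following a path whose consecutive bases differ by the insertion or deletion of exactly one point constraint. The observation tying the two layers together is that the only primitive the path-walk needs at each step is the LP-type optimum (here, the smallest enclosing disk) of the current constraint set after a single update, and this is precisely what the Agarwal--Matou\v{s}ek structure delivers in $O(n^\delta)$ time.

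To answer a $(1,k)$-center query I would therefore walk the length-$O(k^3)$ path, at each step performing the single point insertion or deletion dictated by the path into the Agarwal--Matou\v{s}ek structure, reading off the current radius, and retaining the minimum radius over all visited bases that leave at most $k$ points uncovered. After the walk I restore the structure so that it again holds $P$ (the walk's updates can be undone in reverse order, or a rollback stack can be maintained), so that the outer updates to $P$ still cost $O(n^\delta)$ each. The total query cost is $O(k^3)$ path steps times $O(n^\delta)$ per step, i.e.\ $O(k^3 n^\delta)$, which is the claimed bound.

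The step I expect to require the most care is the interface between the two layers: one must verify that the $(1,k)$-center is realized by one of the $O(k^3)$ bases that Matou\v{s}ek's traversal visits (so that no optimum is missed), that the path genuinely changes the constraint set by a single point per step (so each step is a legal single update for the Agarwal--Matou\v{s}ek structure), and that restoring $P$ after the walk does not inflate the amortized cost. Given that the $(1,0)$-center is a standard LP-type problem of combinatorial dimension $3$ in the feasible case, these are bookkeeping conditions already guaranteed by Matou\v{s}ek's framework, so no new combinatorial argument beyond \cite{Mat95, AM95} should be needed.
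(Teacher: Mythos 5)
Your proposal is correct and matches the paper's own argument essentially verbatim: both combine the Agarwal--Matou\v{s}ek dynamic smallest-enclosing-disk structure ($O(n^\delta)$ per update) with Matou\v{s}ek's length-$O(k^3)$ path traversal of the bases with at most $k$ violations, where each step is a single constraint insertion or deletion, yielding $O(k^3 n^\delta)$ per query. The only additions beyond the paper's sketch are bookkeeping details (feasibility of the $(1,k)$-center instance and rolling back the structure after the walk) that the paper leaves implicit.
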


\section{Well-Separated Disks}
In this section we describe an algorithm for the case in which the two disks $D_1$, $D_2$ of the optimal solution are well separated.  That is, let $c_1$ and $c_2$ be the centers of $D_1$ and $D_2$, and let $r^*$ be their radius.  Then $||c_1 c_2|| \geq r^*$; see Figure \ref{fig:well-sep}.  Without loss of generality, let us assume that $c_1$ lies to the left of $c_2$.
Let $D_i^-$ be the semidisk lying to the left of the line passing through $c_1$ in direction normal to $c_1 c_2$.  
A line $\ell$ is called a separator line if $D_1 \cap D_2 = \emptyset$ and $\ell$ separates $D_1^-$ from $D_2$, or $D_1 \cap D_2 \neq \emptyset$ and $\ell$ separates $D_1^-$ from the intersection points $\partial D_1 \cap \partial D_2$.  
We first show that we can quickly compute a set of $O(k^2)$ lines that contains a separator line.  Next, we describe a decision algorithm, and then we describe the algorithm for computing $D_1$ and $D_2$ provided they are well separated.  

\paragraph{\textbf{\emph{Computing separator lines.}}}
We fix a sufficiently large constant $h$ and choose a set $U = \{ u_1, \ldots, u_h\} \subseteq \b{S}^1$ of directions, where $u_i = \left(\cos (2\pi i / h), \sin (2 \pi i / h)\right)$.  

\begin{figure}
  \centering
  \includegraphics{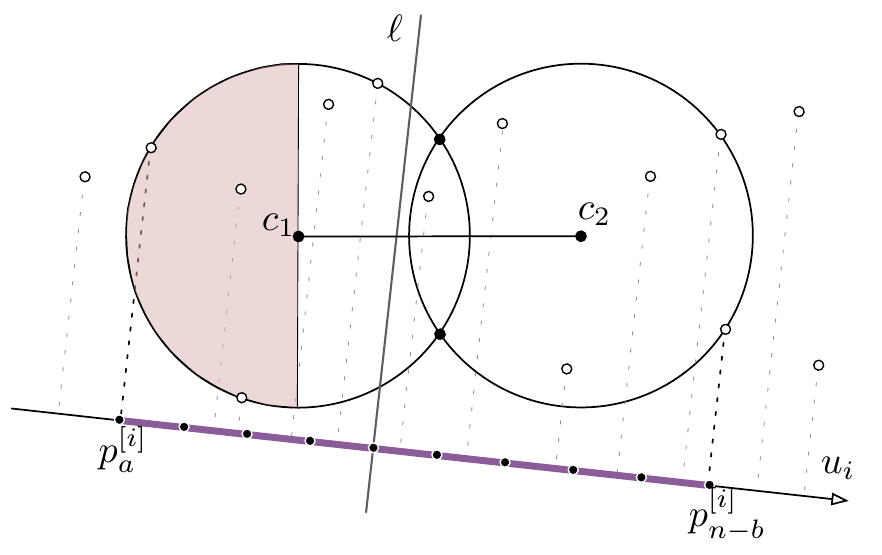}
\caption{\label{fig:well-sep} 
Let $\ell$ is a separator line for disks $D_1$ and $D_2$.
}
\end{figure}


For a point $p \in \b{R}^2$ and a direction $u_i$, let $p^{[i]}$ be the projection of $p$ in the direction normal to $u_i$.  Let $P^{[i]} = \langle p_1^{[i]}, \ldots, p_n^{[i]} \rangle$ be the sorted sequence of projections of points in the direction normal to $u_i$.  For each pair $a, b$ such that $a+b\leq k$, we choose the interval $\delta_{a,b}^{[i]} = [p_a^{[i]}, p_{n-b}^{[i]}]$ and we place $O(1)$ equidistant points in this interval.  See Figure \ref{fig:well-sep}(a).  Let $L_{a,b}^{[i]}$ be the set of (oriented) lines in the direction normal to $u_i$ and passing though these points.  Set 
$$
L = \bigcup_{\substack{1 \leq i \leq h \\ a+b \leq k}} L_{a,b}^{[i]}.
$$

The set $L$ can be computed in $O(k^2 n \log n)$ time.
We claim that $L$ contains at least one separator line.  
Let $u_i \in U$ be the direction closest to $\overrightarrow{c_1 c_2}$.  Suppose $p_a$ and $p_{n-b}$ are the first and the last points of $P$ in the direction $u_i$ that lie inside $D_1 \cup D_2$.  Since $|P \setminus (D_1 \cup D_2)| \leq k$, $a+b \leq k$.  
Let $q_1$ be the extreme point of $D_1^-$ in direction $u_i$ and let $q_2$ be the extreme point of $D_2 \setminus D_1$ in direction $-u_i$.
Since $u_i$ is within a small constant angle of $\overrightarrow{c_1 c_2}$ $$
\langle q_2 - q_1, u_i \rangle 
\geq 
\alpha \langle q_2 - q_1, \overrightarrow{c_1 c_2} \rangle 
= 
\frac{\alpha}{2} \langle c_2 - c_1, \overrightarrow{c_1 c_2} \rangle 
\geq 
\frac{\alpha}{6} \langle p_{n-b} - p_a, u_i \rangle,
$$ 
where $\alpha \leq 1$ is a constant depending on $h$.  Hence if at least $6/\alpha$ points are chosen in the interval $\delta_{a,b}^{[i]}$, then one of the lines in $L_{a,b}^{[i]}$ is a separator line.  We conclude the following.

\begin{lemma}
We can compute in $O(k^2 n \log n)$ time a set $L$ of $O(k^2)$ lines that contains a separator line.
\label{lem:separator-k2}
\end{lemma}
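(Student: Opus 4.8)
The plan is to establish Lemma~\ref{lem:separator-k2} by verifying two things: first, that the set $L$ as defined has size $O(k^2)$ and can be computed within the claimed time bound, and second, that $L$ is guaranteed to contain a separator line. The size bound is immediate from the construction: there are $h = O(1)$ directions, at most $O(k^2)$ pairs $(a,b)$ with $a+b \leq k$, and $O(1)$ candidate lines per interval $\delta_{a,b}^{[i]}$, giving $h \cdot O(k^2) \cdot O(1) = O(k^2)$ lines in total. For the running time, I would first sort the $n$ projections $P^{[i]}$ for each of the $h$ directions, which costs $O(n \log n)$ per direction and hence $O(n \log n)$ overall since $h$ is constant; once the sorted sequences are available, reading off the endpoints $p_a^{[i]}$ and $p_{n-b}^{[i]}$ for each pair $(a,b)$ and placing $O(1)$ equidistant points takes $O(1)$ time per pair, for $O(k^2)$ total per direction. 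This yields the stated $O(k^2 n \log n)$ bound.

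The substantive part is the correctness claim, and here I would argue exactly as the displayed chain of inequalities in the excerpt suggests. Fix the direction $u_i \in U$ closest to $\overrightarrow{c_1 c_2}$; since $U$ samples $\b{S}^1$ at angular resolution $2\pi/h$ with $h$ a large constant, the angle between $u_i$ and $\overrightarrow{c_1 c_2}$ is at most a small constant, which is captured by the constant $\alpha \leq 1$. Because at most $k$ points of $P$ lie outside $D_1 \cup D_2$, the first and last points of $P$ in direction $u_i$ lying inside $D_1 \cup D_2$ are $p_a$ and $p_{n-b}$ for some $a + b \leq k$, so the correct interval $\delta_{a,b}^{[i]}$ is among those we consider. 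The key geometric estimate is that the gap $\langle q_2 - q_1, u_i \rangle$ between the $u_i$-extreme point $q_1$ of $D_1^-$ and the $(-u_i)$-extreme point $q_2$ of $D_2 \setminus D_1$ is a constant fraction of the full span $\langle p_{n-b} - p_a, u_i \rangle$ of the covered points along $u_i$; this is what the inequality $\langle q_2 - q_1, u_i \rangle \geq \tfrac{\alpha}{6}\langle p_{n-b} - p_a, u_i\rangle$ asserts. Since we place $O(1)$ equidistant points across $\delta_{a,b}^{[i]}$, by taking at least $6/\alpha$ of them one of the resulting lines must fall into this guaranteed gap, and any line there separates $D_1^-$ from $D_2$ (or from $\partial D_1 \cap \partial D_2$), i.e., is a separator line.

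I would fill in the middle equality $\langle q_2 - q_1, \overrightarrow{c_1 c_2}\rangle = \tfrac{1}{2}\langle c_2 - c_1, \overrightarrow{c_1 c_2}\rangle$ by noting that $q_1$ is the leftmost point of $D_1$ along $\overrightarrow{c_1 c_2}$ and $q_2$ the relevant extreme of $D_2 \setminus D_1$, so their projected separation along the center direction is controlled by $\|c_1 c_2\|$ and the common radius; the well-separated hypothesis $\|c_1 c_2\| \geq r^*$ is what keeps this span a positive constant fraction of the total point-span. The first inequality, passing from $u_i$ to $\overrightarrow{c_1 c_2}$, is just the statement that projecting onto a nearby direction loses at most a constant factor $\alpha$, and the last inequality bounds the full point-span $\langle p_{n-b} - p_a, u_i\rangle$ by a constant times the center-direction span since all covered points lie within $D_1 \cup D_2$, whose extent along any direction is $O(\|c_1 c_2\| + r^*) = O(\|c_1 c_2\|)$.

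I expect the main obstacle to be making the geometric constant-fraction estimate fully rigorous, i.e., pinning down $\alpha$ and the constant $6$ in terms of $h$ and verifying that every step survives the worst-case angular error between $u_i$ and $\overrightarrow{c_1 c_2}$. In particular one must confirm that the gap $[q_1, q_2]$ along $u_i$ remains nonempty and of the claimed relative size even when $D_1$ and $D_2$ overlap, which is where the case split in the definition of a separator line (disjoint disks versus the intersection points $\partial D_1 \cap \partial D_2$) enters. The rest is bookkeeping on the sorting and counting, which is routine.
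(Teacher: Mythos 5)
Your proposal follows the paper's argument essentially verbatim: the same candidate set $L$ (a constant number of directions, $O(k^2)$ trimmed intervals $\delta_{a,b}^{[i]}$, and $O(1)$ equidistant lines per interval), the same sorting-based running-time accounting, and the same chain of inequalities showing that the gap between $q_1$ and $q_2$ spans an $\alpha/6$ fraction of the interval, so that $6/\alpha$ equidistant lines guarantee a separator. One small correction to your elaboration of the middle equality: the factor $\tfrac{1}{2}$ arises because $q_1$, being the $u_i$-extreme point of the semidisk $D_1^-$, lies on the diameter chord through $c_1$ and hence projects (essentially) onto $c_1$ --- it is not the leftmost point of $D_1$ --- while $q_2$ lies on the perpendicular bisector of $c_1c_2$ when the disks overlap; this slip does not affect the rest, which matches the paper.
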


Let $D_1, D_2$ be a $(2,k)$-center of $P$, let $\ell \in L$ be a line, and let $P^- \subseteq P$ be the set of points that lie in the left halfplane bounded by $\ell$.  We call $D_1, D_2$ a $(2,k)$-center \emph{consistent with $\ell$} if $P^- \cap (D_1 \cup D_2) \subseteq D_1$, the center of $D_1$ lies to the left of $\ell$, and $\partial D_1$ contains at least one point of $P^-$.  
We first describe a decision algorithm that determines whether there is a $(2,k)$-center of unit radius that is consistent with $\ell$.  Next, we describe an algorithm for computing a $(2,k)$-center consistent with $\ell$, which will lead to computing an optimal $(2,k)$-center of $P$, provided there is a well-separated optimal $(2,k)$-center of $P$.  

\paragraph{\textbf{\emph{Decision algorithm.}}}
Let $\ell \in L$ be a line.  We describe an algorithm for determining whether there is a unit radius $(2,k)$-center of $P$ that is consistent with $\ell$.  Let $P^-$ (resp. $P^+$) be the subset of points in $P$ that lie in the left (resp. right) halfplane bounded by $\ell$; set $n^- = |P^-|$, $n^+ = |P^+|$.  Suppose $D_1, D_2$ is a unit-radius $(2,k)$-center of $P$ consistent with $\ell$, and let $c_1, c_2$ be their centers.  Then $P^- \cap (D_1 \cup D_2) \subseteq D_1$ and $|P^- \cap D_1| \geq n^- - k$.
For a subset $Q \subset P$, let $\EuD(Q) = \{D(q) \mid q \in Q\}$ where $D(q)$ is the unit disk centered at $q$.  Let $\EuD^- = \EuD(P^-)$ and $\EuD^+ = \EuD(P^+)$.  For a point $x \in \b{R}^2$, let $\EuD^+_x = \{D \in \EuD^+ \mid x \in D\}$.  Since $\partial D_1$ contains a point of $P^-$ and at most $k$ points of $P^-$ do not lie in $D_1$, $c_1$ lies on an edge of $\EuA_{\leq k}(\EuD^-)$.  

We first compute $\EuA_{\leq k}(\EuD^-)$ in $O(n k \log n)$ time.
For each disk $D \in \EuD^+$, we compute the intersection points of $\partial D$ with the edges of $\EuA_{\leq k}(\EuD^-)$.  By Lemma \ref{lem:k-udc}, there are $O(nk)$ such intersection points, and these intersection points split each edge into \emph{edgelets}.  The total number of edgelets is also $O(nk)$.  
Using Lemma \ref{lem:k-udc}, we can compute all edgelets in time $O(nk\log n)$, because each disk boundary from $\EuD^+$ intersects at most $O(k)$ edges of $\EuA_{\leq k}(\EuD^-)$ and each intersection can be found in $O(\log n)$ time be examining the covering unit disk curves.
All points on an edgelet $\gamma$ lie in the same subset of disks of $\EuD^+$, which we denote by $\EuD^+_{\gamma}$.  Let $P^+_{\gamma} \subseteq P^+$ be the set of centers of disks in $\EuD^+_{\gamma}$, and let $\kappa_{\gamma} = \lambda(\gamma, \EuD^-)$ be the level of $\gamma$ in $\EuD^-$.  A unit disk centered at a point on $\gamma$ contains $P^+_{\gamma}$ and all but $\kappa_{\gamma}$ points of $P^-$.  If at least $k^\prime = k-\kappa_\gamma$ points of $P^+ \setminus P_{\gamma}^+$ can be covered by a unit disk, which is equivalent to $\EuA_{\leq k^{\prime}}(\EuD^+ \setminus \EuD_{\gamma})$ being nonempty, then all but $k$ points of $P$ can be covered by two unit disks.  

When we move from one edgelet $\gamma$ of $\EuA_{\leq k}(\EuD^-)$ to an adjacent one $\gamma^\prime$ with $\sigma$ as their common endpoint, then $\EuD^+_{\gamma} = \EuD^+_{\gamma^\prime}$ (if $\sigma$ is a vertex of $\EuA_{\leq k}(\EuD^-)$), $\EuD^+_{\gamma^\prime} = \EuD^+_{\gamma} \cup \{ D\}$ (if $\sigma \in \partial D$ and $\gamma^\prime \subset \{ D \}$), or $\EuD^+_{\gamma^\prime} = \EuD^+_{\gamma} \setminus \{D\}$ (if $\sigma \in \partial D$ and $\gamma \subset \EuD$).  
We therefore traverse the graph induced by the edgelets of $\EuA_{\leq k}(\EuD)$ and maintain $\EuD_{\gamma}^+$ in the dynamic  data structure described  in Section \ref{sec:arrangementD} as we visit the edgelets $\gamma$ of $\EuA_{\leq k}(\EuD^-)$.  At each step we process an edgelet $\gamma$, insert or delete a disk into $\EuD_{\gamma}^+$, and test whether $\EuA_{\leq j}(\EuD_{\gamma}^+) = \emptyset$ where $j = k - \lambda(\gamma, \EuD^-)$.  If the answer is yes at any step, we stop.  We spend $O(k^3 \log n)$ time at each step, by Lemma \ref{lem:dynD}.  Since the number of edgelets is $O(nk)$, we obtain the following.

\begin{lemma}
Let $P$ be a set of $n$ points in $\b{R}^2$, $\ell$ a line in $L$,  and $0 \leq k \leq n$ an integer.  
We can determine in $O(nk^4 \log n)$ time whether 
there is a unit-radius $(2,k)$-center of $P$ that is consistent with $\ell$.
\end{lemma}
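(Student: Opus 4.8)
The plan is to reduce the decision problem to a one-parameter family of single-disk covering queries --- one for each candidate position of the center $c_1$ of the left disk $D_1$ --- and then to answer all of these queries in a single sweep, using the dynamic structure of Lemma~\ref{lem:dynD} to amortize the work.

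First I would localize $c_1$. Consistency with $\ell$ forces $P^- \cap (D_1 \cup D_2) \subseteq D_1$, so of the at most $k$ points that the pair may miss, every missed point of $P^-$ must in fact be missed by $D_1$ alone; hence $\lambda(c_1, \EuD^-) \leq k$. Consistency also requires $\partial D_1$ to pass through a point of $P^-$, so $c_1$ lies on some circle $\partial D(q)$ with $q \in P^-$, i.e.\ on an edge of $\EuA(\EuD^-)$. Together these place $c_1$ on an edge of $\EuA_{\leq k}(\EuD^-)$, which by Lemma~\ref{lem:CS-nk} has complexity $O(nk)$ and can be built in $O(nk\log n)$ time. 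This confinement is the conceptual heart of the argument; everything after it is bookkeeping.

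Next I would refine the edges into \emph{edgelets} on which $D_1$ is combinatorially fixed. Overlaying the boundaries of $\EuD^+$ onto $\EuA_{\leq k}(\EuD^-)$ cuts each edge at the $O(nk)$ crossings supplied by Lemma~\ref{lem:k-udc}, yielding $O(nk)$ edgelets, all locatable in $O(nk\log n)$ time through the covering unit-disk curves. On a fixed edgelet $\gamma$ the subset $\EuD^+_\gamma$ of $\EuD^+$-disks containing $\gamma$ is invariant, so a unit disk centered anywhere on $\gamma$ covers exactly $P^+_\gamma$ from $P^+$ and misses exactly $\kappa_\gamma = \lambda(\gamma, \EuD^-)$ points of $P^-$. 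Because $D_2$ contributes nothing on $P^-$ under consistency, a consistent unit-radius solution with $c_1 \in \gamma$ exists precisely when $D_2$ can be placed to miss at most $k^\prime = k - \kappa_\gamma$ of the remaining points $P^+ \setminus P^+_\gamma$, that is, when $\EuA_{\leq k^\prime}(\EuD^+ \setminus \EuD_\gamma) \neq \emptyset$ --- exactly query \emph{(O2)}.

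Finally I would evaluate these queries by walking the adjacency graph of the edgelets. Crossing from $\gamma$ to a neighbor alters $\EuD^+_\gamma$ by at most one disk --- an insertion or deletion according to whether the shared endpoint lies on some $\partial D$, $D \in \EuD^+$, and no change across a vertex of $\EuA_{\leq k}(\EuD^-)$ --- so I can maintain the set incrementally with $O(1)$ updates per step. Each step then costs one $O(\log n)$ update plus one $O(k^3\log n)$ emptiness query, and with $O(nk)$ edgelets the sweep runs in $O(nk^4\log n)$, dominating the $O(nk\log n)$ preprocessing and giving the claimed bound. The main obstacle I anticipate lies in organizing this walk so that the incremental invariant holds throughout: $\EuA_{\leq k}(\EuD^-)$ may break into $\Omega(k^2)$ components (cf.\ the remark following Lemma~\ref{lem:udcs-k2}), so I would Euler-tour each component separately and reconstruct $\EuD^+_\gamma$ directly at its starting edgelet; since there are $O(k^2)$ components and each reconstruction handles at most $n$ disks, these rebuilds add only $O(nk^2\log n)$, which is absorbed into the $O(nk^4\log n)$ total.
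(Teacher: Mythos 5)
Your proposal follows essentially the same route as the paper's proof: confine $c_1$ to an edge of $\EuA_{\leq k}(\EuD^-)$, split edges into edgelets by the boundaries of $\EuD^+$, and sweep the edgelet graph while maintaining the stabbed subset of $\EuD^+$ in the dynamic structure of Lemma~\ref{lem:dynD}, answering an (O2) query per edgelet for a total of $O(nk)\cdot O(k^3\log n)=O(nk^4\log n)$. Your closing remark about Euler-touring the $O(k^2)$ connected components and rebuilding the data structure at each component's start is actually a point the paper glosses over ("traverse the graph induced by the edgelets"), so your write-up is, if anything, slightly more careful on that detail.
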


\paragraph{\textbf{\emph{Optimization algorithm.}}}
Let $\ell$ be a line in $L$.  
Let $r^*$ be the smallest radius of a $(2,k)$-center of $P$ that is consistent with $\ell$.  
Our goal is to compute a $(2,k)$-center of $P$ of radius $r^*$ that is consistent with $\ell$.  
We use the parametric search technique \cite{Meg83} --- we simulate the decision algorithm generically at $r^*$ and use the decision algorithm to resolve each comparison, which will be of the form: given $r_0 \in \b{R}^+$, is $r_0 \leq r^*$?  We simulate a parallel version of the decision procedure to reduce the number of times the decision algorithm is invoked.
Note that we need to parallelize only those steps of the simulation that depend on $r^*$, i.e., that require comparing a value with $r^*$.  Instead of simulating the entire decision algorithm, as in ~\cite{Epp97}, we stop the simulation after computing the edgelets and return the smallest $(2,k)$-center found so far, i.e., the smallest radius for which the decision algorithm returned ``yes.''  Since we stop the simulation earlier, we do not guarantee that we find the a $(2,k)$-center of $P$ of radius $r^*$ that is consistent with $\ell$.  However, as argued below this is sufficient for our purpose.  

Let $P^-$, $P^+$ be the same as in the decision algorithm.  Let $\EuD^-$, $\EuD^+$ etc. be the same as above except that each disk is of radius $r^*$ (recall that we do not know the value of $r^*$).  We simulate the algorithm to compute the edgelets of $\EuA_{\leq k}(\EuD^-)$ as follows.  First, we compute the ${\le} k^{th}$ order farthest point Voronoi diagram of $P^-$ in time $O(n \log n + nk^2)$ \cite{AGSS89}.  Let $e$ be an edge of the diagram with points $p$ and $q$ of $P^-$ as its neighbors, i.e., $e$ is a portion of the bisector of $p$ and $q$.  Then for each point $x \in e$, the disk of radius $||xp||$ centered at $x$ contains at least $n^- - k$ points of $P^-$.  We associate an interval $\delta_e = \{||xp|| \mid x \in e\}$.  By definition, $e$ corresponds to a vertex of $\EuA_{\leq k}(\EuD^-)$ if and only if $r^* \in \delta_e$; namely, if $||xp||=r^*$, for some $x \in e$, then $x$ is a vertex of $\EuA_{\leq k}(\EuD^-)$, incident upon the edges that are portions of $\partial D(p)$ and $\partial D(q)$.  
Let $X$ be the sorted sequence of the endpoints of the intervals.
By doing a binary search on $X$ and using the decision procedure at each step, we can find two consecutive endpoints in $X$ between which $r^*$ lies.  We can now compute all edges $e$ of the Voronoi diagram such that $r^* \in \delta_e$.  We thus compute all vertices of $\EuA_{\leq k}(\EuD^-)$.  
Since we do not know $r^*$, we do not have actual coordinates of the vertices.  We represent each vertex as a pair of points.  
Similarly, each edge is represented as a point $p \in P^-$, indiciating that $e$ lies in $\partial D(p)$, and it can be computed using the cells of the Voronoi diagram.  Given a vertex of $\EuA_{\leq k}(\EuD^-)$ and an outgoing edge, represented by the point $p \in P^-$, we can compute the other endpoint as the next edge $e^\prime$ of the Voronoi cell of the $p$ that is a point in $\EuA_{\leq}(\EuD^-)$ by walking around the boundary of the cell.  
Once we have all the edges of $\EuA_{\leq k}(P^-)$, we can construct the graph induced by them and compute $O(k^2)$ $x$-monotone unit-disk curves whose union is the set of edges in $\EuA_{\leq k}(P^-)$, using Lemma \ref{lem:udcs-k2}.  Since this step does not depend on the value of $r^*$, we need not parallelize it.  Let $\Xi = \{\xi_i, \ldots, \xi_u\}$, $u=O(k^2)$, be the set of these curves.  

Next, for each disk $D \in \EuD^+$ and for each $\xi_i \in \Xi$, we compute the edges of $\xi_i$ that $\partial D$ intersects, using a binary search.  We perform these $O(nk^2)$ binary searches in parallel and use the decision algorithm at each step.  Incorporating Cole's technique~\cite{Col87} in the binary search, the decision procedure is invoked only $O(\log n)$ times.  For an edge $e \in \EuA_{\leq k}(\EuD)$, let $\EuD_e^+ \in \EuD$ be the set of disks whose boundaries intersect $e$.  We sort the disks in $\EuD_e^+$ by the order in which their boundaries intersect $e$.  By doing this in parallel for all edges and using a parallel sorting algorithm for each edge, we can perform this step by invoking the decision algorithm $O(\log n)$ times.  
The total time spent is $O(nk^4 \log^2 n)$.



\paragraph{\textbf{\emph{Putting pieces together.}}}
We repeat the optimization algorithm for all lines in $L$ and return the smallest $(2,k)$-center that is consistent with a line in $L$.  
Since Lemma \ref{lem:separator-k2} shows that as long as the solution is well separated at least one line in $L$ is a separator line for the optimal $(2,k)$-center of $P$, the smallest radius returned must be that of the optimal $(2,k)$-center of $P$.
Hence, we conclude the following:

\begin{lemma}
Let $P$ be a set of $n$ points in $\b{R}^2$ and $0 \leq k \leq n$ an integer.  
If an optimal $(2,k)$-center of $P$ is well separated, then the $(2,k)$-center problem for $P$ can be solved in $O(n k^6 \log^2 n)$ time. 
\label{lem:far}
\end{lemma}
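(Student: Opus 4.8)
The plan is to assemble the three components developed earlier in this section—the separator-line construction, the decision algorithm, and the parametric-search optimization—into a single bound, and then verify that running the optimization over all of $L$ correctly recovers the global optimum in the well-separated case. First I would fix the optimal well-separated $(2,k)$-center $D_1, D_2$ with radius $r^*$ and centers $c_1, c_2$, assuming $\|c_1 c_2\| \geq r^*$ with $c_1$ to the left of $c_2$. By Lemma \ref{lem:separator-k2}, the set $L$ of $O(k^2)$ lines, computable in $O(k^2 n \log n)$ time, contains a separator line $\ell^*$ for this optimal solution. The key observation is that a separator line forces exactly the consistency structure exploited by the optimization algorithm: the points of $P^-$ lying in $D_1 \cup D_2$ all lie in $D_1$, the center $c_1$ lies to the left of $\ell^*$, and (after perturbing $\ell^*$ to touch the configuration) $\partial D_1$ passes through a point of $P^-$. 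Thus the optimal $(2,k)$-center is consistent with $\ell^* \in L$, and the radius returned when we process $\ell^*$ equals $r^*$.

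Next I would account for the running time of a single invocation of the optimization algorithm on one line $\ell \in L$. The decision algorithm runs in $O(nk^4 \log n)$ time by the preceding lemma, built on the $O(nk)$ edgelets of $\EuA_{\leq k}(\EuD^-)$ and the $O(k^3 \log n)$ cost per edgelet from the dynamic data structure of Lemma \ref{lem:dynD}. The optimization wraps this in parametric search: the $\le k$-th order farthest-point Voronoi diagram of $P^-$ is computed in $O(n \log n + nk^2)$ time, a binary search over the sorted interval endpoints $X$ localizes $r^*$ using $O(\log n)$ decision calls, and the edges of $\EuA_{\leq k}(\EuD^-)$ are recovered symbolically as point-pairs. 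The $O(k^2)$ $x$-monotone unit-disk curves come from Lemma \ref{lem:udcs-k2} at cost $O(nk)$ and require no parallelization. The dominant cost is the $O(nk^2)$ binary searches locating, for each disk of $\EuD^+$ and each curve $\xi_i$, the intersected edges; performed in parallel with Cole's technique \cite{Col87}, the decision algorithm is invoked only $O(\log n)$ times, giving $O(nk^4 \log^2 n)$ per line.

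Finally I would multiply by $|L| = O(k^2)$: repeating the optimization over all lines in $L$ and returning the smallest consistent radius costs $O(k^2) \cdot O(nk^4 \log^2 n) = O(nk^6 \log^2 n)$, which dominates the $O(k^2 n \log n)$ construction of $L$. Correctness of taking the minimum over $L$ follows because every radius returned corresponds to an actual $(2,k)$-center consistent with some line (hence no smaller than $r^*$), while the line $\ell^*$ guarantees that $r^*$ itself is attained. I expect the main subtlety to lie not in the arithmetic but in justifying that truncating the parametric simulation after the edgelet-computation phase—rather than simulating the full decision algorithm as in \cite{Epp97}—still returns a radius at which the decision algorithm answered ``yes'' and that this radius is exactly $r^*$ when $\ell = \ell^*$; the argument is that the combinatorial structure of $\EuA_{\leq k}(\EuD^-)$ and its edgelet intersections changes only at the critical values probed during the search, so the smallest feasible radius encountered coincides with the true optimum for the separating line.
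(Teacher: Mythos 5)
Your proposal is correct and follows essentially the same route as the paper: it assembles Lemma \ref{lem:separator-k2} ($O(k^2)$ candidate lines in $O(k^2 n \log n)$ time), the $O(nk^4\log n)$ decision procedure, and the truncated parametric-search optimization at $O(nk^4\log^2 n)$ per line, then multiplies by $|L| = O(k^2)$ and argues correctness by taking the minimum over all lines, exactly as the paper does. You also correctly identify the one genuine subtlety—that stopping the simulation after the edgelet phase need not return the per-line optimum, only a radius that is guaranteed to equal the global optimum when the line is a separator for the optimal well-separated solution—which is the same point the paper flags and resolves in its ``Putting pieces together'' paragraph.
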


\section{Nearly Concentric Disks}
\label{sec:near}

\begin{figure}[ht]
  \centering
  \includegraphics{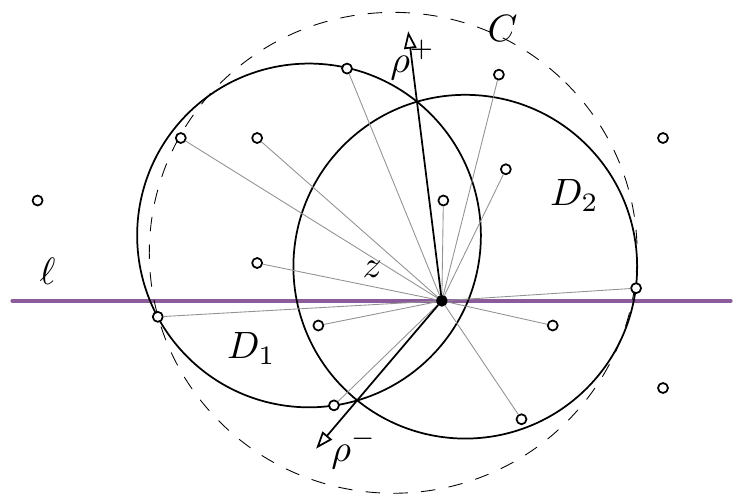}
  \caption{\label{fig:near}
          Two unit disks $D_1$ and $D_2$ or radius $r^*$ with centers closer than a distance $r^*$.  
      }
\end{figure}

In this section we describe an algorithm for the case in which the two disks $D_1$ and $D_2$ of the optimal solution are not well separated.  More specifically, let $c_1$ and $c_2$ be the centers of $D_1$ and $D_2$ and let $r^*$ be their radius.  This section handles the case where $||c_1 c_2|| \leq r^*$.  

First, we find an \emph{intersector point} $z$ of $D_1$ and $D_2$ --- a point that lies in $D_1 \cap D_2$.  
We show how $z$ defines a set $\EuP$ of $O(n^2)$ possible partitions of $P$ into two subsets, such that for one partition $P_{i,j}$, $P \setminus P_{i,j}$ the following holds: $(D_1 \cup D_2) \cap P = (D_1 \cap P_{i,j}) \cup (D_2 \cap (P \setminus P_{i,j}))$.
Finally, we show how to search through the set $\EuP$ in $O(k^{7} n^{1+\delta})$ time, deterministically, for any $\delta>0$, or in $O(k^{7} n \log^3 n)$ expected time.  

\paragraph{\textbf{\emph{Finding an intersector point.}}}
Let $C$ be the circumcircle of $P \cap (D_1 \cup D_2)$.
Eppstein~\cite{Epp97} shows that we can select $O(1)$ points inside $C$ such that at least one, $z$, lies in $D_1 \cap D_2$.  
We can hence prove the following.

\begin{lemma}
Let $P$ be a set of $n$ points in $\b{R}^2$.  We can generate in $O(nk^3)$ time a set $\EuZ$ of $O(k^3)$ points such that for any nearly concentric $(2,k)$-center $D_1, D_2$, one of the points in $\EuZ$ is their intersector point.  
\end{lemma}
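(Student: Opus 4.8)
The plan is to reduce the task of enumerating candidate intersector points to the task of enumerating candidate circumcircles, and then to invoke Eppstein's construction on each candidate. First I would observe that for a nearly concentric $(2,k)$-center $D_1,D_2$, the set $F = P \setminus (D_1\cup D_2)$ of uncovered points satisfies $|F|\le k$, so the circumcircle $C$ of $P\cap(D_1\cup D_2)$ is precisely the smallest enclosing circle of $P\setminus F$. Consequently $C$ is determined by a basis of at most three points of $P\setminus F$ lying on $\partial C$, and every point of $P$ strictly outside $C$ must belong to $F$, so at most $k$ points of $P$ lie outside $C$. In the language of the $(1,k)$-center problem --- an LP-type problem of combinatorial dimension $c=3$ that is always feasible (a sufficiently large circle covers all of $P$ with zero violations) --- this says exactly that $C$ is a basis with at most $k$ violations.

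Next I would invoke Matou\v{s}ek's enumeration algorithm~\cite{Mat95} for this $(1,k)$-center problem. Since the problem is feasible and has basis size $c=3$, the algorithm visits all $O(k^3)$ bases with at most $k$ violations along a traversal path of length $O(k^3)$, running in $O(nk^3)$ total time. Each visited basis of two or three points determines a unique circle (the two points as a diameter, or the circumscribed circle of the three), so this produces a family of $O(k^3)$ candidate circumcircles that is guaranteed to contain the true $C$ for every nearly concentric optimum $D_1,D_2$.

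For each candidate circle produced in the enumeration, I would apply Eppstein's construction~\cite{Epp97} to select the associated $O(1)$ points lying inside it, and let $\EuZ$ be the union of these points over all $O(k^3)$ candidate circles. This gives $|\EuZ| = O(k^3)$ and adds only $O(1)$ work per basis, so the whole procedure still runs in $O(nk^3)$ time. Correctness is then immediate: for any nearly concentric $(2,k)$-center $D_1,D_2$, the true circumcircle $C$ appears among the enumerated circles, and for that circle Eppstein's guarantee furnishes one of its $O(1)$ associated points inside $D_1\cap D_2$ --- an intersector point --- which therefore lies in $\EuZ$.

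The main obstacle I anticipate is the justification in the first step that the smallest enclosing circle of an arbitrary subset $P\setminus F$ with $|F|\le k$ really is a basis at level $\le k$ in the precise sense required by Matou\v{s}ek's framework, so that his algorithm is guaranteed to visit it; one must check that this circle coincides with the optimal basis of the corresponding $\le k$-violation subproblem and must handle degeneracies (two- versus three-point bases, and cocircular configurations) so that no valid $C$ is missed. Once this is settled, the remaining steps are routine applications of the cited results.
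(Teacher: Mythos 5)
Your proposal is correct and takes essentially the same approach as the paper: run Matou\v{s}ek's $(1,k)$-center algorithm to enumerate $O(k^3)$ candidate smallest enclosing circles in $O(nk^3)$ time, then place Eppstein's $O(1)$ points inside each candidate so that one is guaranteed to lie in $D_1 \cap D_2$. The only difference is in justifying that the true circumcircle $C$ appears among the enumerated circles: the paper argues directly on the recursion tree (if the current circle is not $C$, its boundary contains an outlier, and some recursion branch removes it within at most $k$ steps), whereas you appeal to Matou\v{s}ek's guarantee that every basis with at most $k$ violations is visited --- both justifications are sound, and the degeneracy caveat you flag is glossed over by the paper as well.
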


\begin{proof}
%
Using Matou\v{s}ek's~\cite{Mat95} algorithm for solving LP-type problems with violations, in $O(k^3 n)$ time we can find the smallest circle that contains $n-k$ points of $P$.  
Briefly, the algorithm runs by finding the three points defining the circumcircle, removing each one in turn, and recursing until $k$ points have been removed.  Matou\v{s}ek shows that if we keep track of which nodes in the recursion we reach and halt the recursion if we have seen that node before, then the size of the recursion tree is only $O(k^3)$.  
In the running of this algorithm we generate all circles which include exactly $n-j$ points of $P$ for $0 \leq j \leq k$.  We claim that one of these circles must be $C$. 

If the initial circle is not $C$, then it must have at least one point on its 
boundary which is not in $P \cap (D_1 \cup D_2)$.  At least one path of the recursion removes this point.  Since we can reach the point set $P \cap (D_1 \cup D_2)$ in at most $k$ steps, some step in this recursion must return $C$.  

Finally, since the area of $D_1 \cup D_2$ is a constant fraction of $C$ when $D_1, D_2$ are nearly concentric, then by selecting a constant number of points in $C$ one can be guaranteed to be an intersector point. 
\end{proof}


Let $z$ be an intersector point of $D_1$ and $D_2$, and let $\rho^+$, $\rho^-$ be the two rays from $z$  to the points of $\partial D_1 \cap \partial D_2$.  
Since $D_1$ and $D_2$ are nearly concentric, the angle between them is at least some constant $\theta$.  We choose a set $U \subseteq S^1$ of $h = \lceil 2 \pi / \theta \rceil$ uniformly distributed directions.  For at least one $u \in U$, the line $\ell$ in direction $u$ and passing through $z$ separates $\rho^+$ and $\rho^-$, see Figure \ref{fig:near}.  We fix a pair $z,u$ in $Z \times U$ and compute a $(2,k)$-center of $P$, as described below.  We repeat this algorithm for every pair.  If $D_1$ and $D_2$ are nearly concentric, then our algorithm returns an optimal $(2,k)$-center.  

\paragraph{\textbf{\emph{Fixing $z$ and $u$.}}}
For a subset $X \subset P$ and for an integer $t\geq 0$, let $r^t(X)$ denote the minimum radius of a $(1,t)$-center of $X$.  Let $P^+$ (resp. $P^-$) be the subset of $P$ lying above (resp. below) the $x$-axis; set $n^+ = |P^+|$ and $n^- = |P^-|$.  Sort $P^+ = \langle p_1^+, \ldots, p_{n^+}^+\rangle$ in clockwise order and $P^- = \langle p_1^-, \ldots, p_{n^-}^- \rangle$ in counterclockwise order.  
For $0 \leq i \leq n^+$, $0 \leq j \leq n^-$, let $P_{i,j} = \{p_1^+, \ldots, p_i^+, p_1^-, \ldots, p_j^- \}$ and $Q_{i,j} = P \setminus P_{i,j}$.  
For $0 \leq t \leq k$, let 
$$
m_{i,j}^t = \max \{ r^t(P_{i,j}), r^{k-t}(Q_{i,j}) \}.
$$
For $0 \leq t \leq k$, we define an $n^+ \times n^-$ matrix $M^t$ such that $M^t(i,j) = m^t_{i,j}$.  

Suppose $z$ is an intersector point of $D_1$ and $D_2$, $\ell$ separates $\rho^+$ and $\rho^-$, and $\rho^+$ (resp. $\rho^-$) lies between $p_a^+, p_{a+1}^+$ (resp. $p_b^-, p_{b+1}^-$).  Then $P \cap (D_1 \cup D_2) = (P_{a,b} \cap D_1) \cup (Q_{a,b} \cup D_2)$; see Fig \ref{fig:near}.  
If $|P_{a,b} \setminus D_1| = t$, then $r^* = m^t_{a,b}$.  
The problem thus reduces to computing 
$$
\mu(z,u) = \min_{i,j,t} m^t_{i,j}
$$
where the minimum is taken over $0 \leq i \leq n^+$, $0 \leq j \leq n^-$, and $0 \leq t \leq k$.  For each $t$, we compute $\mu^t(z,u) = \min_{i,j} m^t_{i,j}$ and choose the smallest among them.  

We note two properties of the matrix $M^t$ that will help search for $\mu^t(z,u)$:

\begin{itemize}
\item{(P1)} 
If $r^t(P_{i,j}) > r^{k-t}(Q_{i,j})$ then $m^t_{i,j} \leq m^t_{i^\prime,j^\prime}$ for $i^\prime \geq i$ and $j^\prime \geq j$.  These partitions only add points to $P_{i,j}$ and removes points from $Q_{i,j}$, and thus cannot decrease $r^t(P_{i,j})$ or increase $r^{k-t}(Q_{i,j})$.  
Similarly, if $r^{k-t}(Q_{i,j}) > r^t(P_{i,j})$, then $m^t_{i,j} < m^t_{i^\prime, j^\prime}$ for $i^\prime \leq i$ and $j^\prime \leq j$. 
\item{(P2)} Given a value $r$, if $r^t(P_{i,j}) > r$, then $m^t_{i^\prime,j^\prime} > r$ for $i^\prime \geq i$ and $j^\prime \geq j$, and if $r^t(Q_{i,j}) > r$, then $m^t_{i^\prime,j^\prime} > r$ for $i^\prime \leq i$ and $j^\prime \leq j$.
\end{itemize}

\paragraph{\textbf{\emph{Deterministic solution.}}}
We now have the machinery to use a technique of Frederickson and Johnson \cite{FJ82}.  
For simplicity, let us assume that $n^+ = n^- = 2^{\tau+1}$ where $\tau  = \lceil \log_2 n \rceil + O(1)$.  The algorithm works in $\tau$ phases.
In the beginning of the $h$th phase we have a collection $\EuM_h$ of $O(2^h)$ submatrices of $M^t$, each of size $(2^{\tau - h+1} + 1) \times (2^{\tau-h+1} + 1)$.  Initially $\EuM_1 = \{M^t\}$.  In the $h$th phase we divide each matrix $N \in \EuM_h$ into four submatrices each of size $(2^{\tau-h} + 1) \times (2^{\tau-h} + 1)$ that overlap along one row and one column.  We call the cell common to all four submatrices the \emph{center cell} of $N$.  Let $\EuM_h^\prime$ be the resulting set of matrices.  Let $\EuC = \{ (i_1, j_1), \ldots, (i_s, j_s)\}$ be the set of center cells of matrices in $\EuM_h$.  We compute $m^t_{i_l, j_l}$ for each $1 \leq l \leq s$.  We use (P1) to remove the matrices of $\EuM_h$ that are guaranteed not to contain the value $\mu^t(z,u)$.  
In particular, if $m^t_{i_l,j_l} = r^t(P_{i_l,j_l})$ and there is a matrix $N \in \EuM^\prime_h$ with the upper-left corner cell $(i^\prime, j^\prime)$ such that $i^\prime \leq i_l$ and $j^\prime \leq j_l$, then we can remove $N$.  
Similarly if $m^t_{i_l,j_l} = r^{k-t}(Q_{i,j})$ and there is a matrix $N \in \EuM^\prime_h$ with the lower-right corner cell $(i^\prime,j^\prime)$ such that $i^\prime \geq i_l$ and $j^\prime \geq j_l$, we can delete $N$. 
We then set $\EuM^\prime_h$ to $\EuM_{h+1}$. 

\begin{figure}[h]
\center
\includegraphics{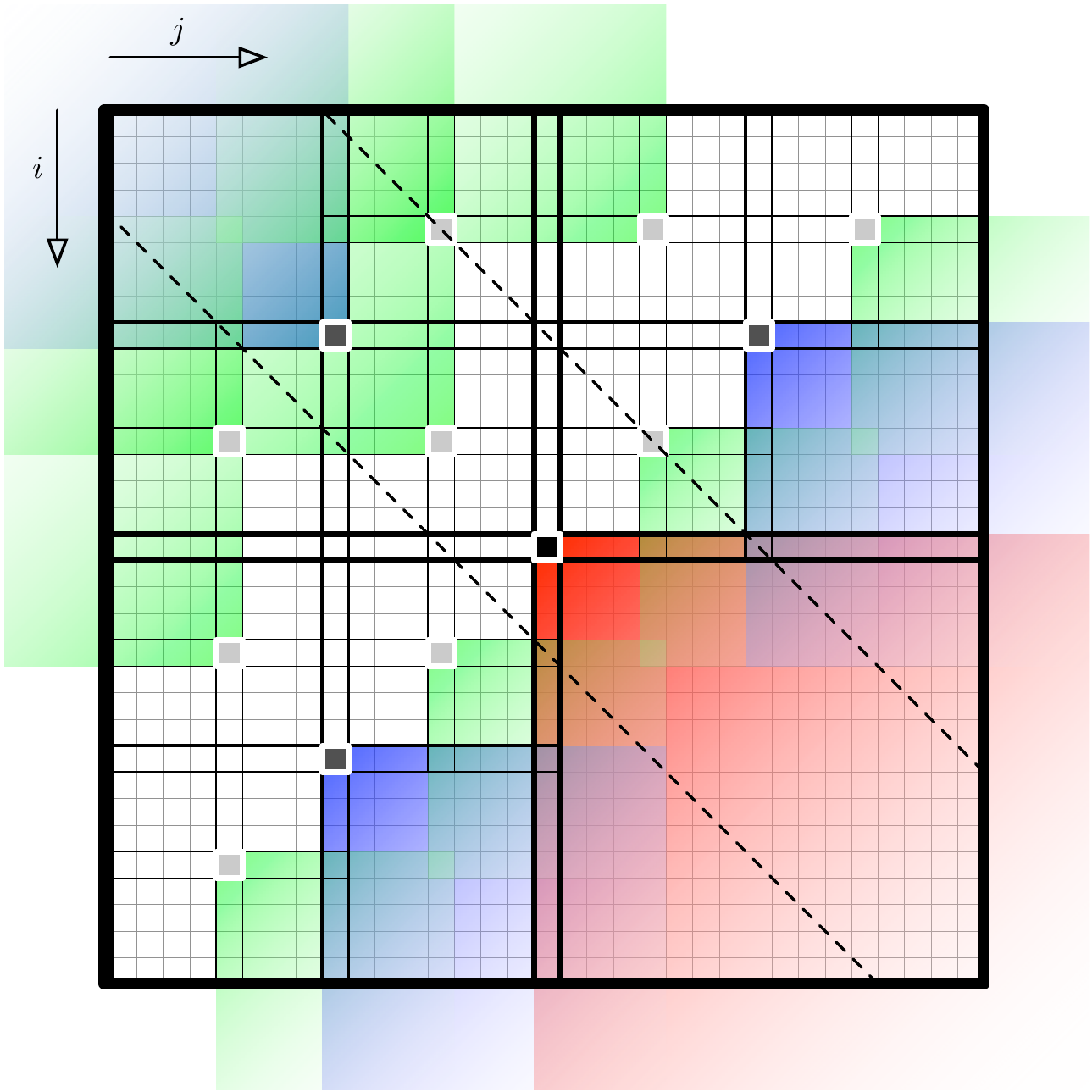}
\caption{\label{fig:FJ-mat}
Example of running deterministic algorithm through 3 phases.  Shaded regions have been pruned.  Center cells are darkened.  
}
\end{figure}

\begin{lemma}
Before the $h$th phase consider a diagonal from large $i$ and $j$ to small $i$ and $j$ that passes through at least one center cell of a matrix $N \in \EuM_h$.  It passes through at most one more center cell of a matrix $N^\prime \in \EuM_h$.  
\label{lem:diag-center}
\end{lemma}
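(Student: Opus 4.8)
The plan is to prove the statement directly, with no induction on $h$, by showing that no main diagonal can contain the centers of three surviving matrices. Fix a diagonal $D=\{(i,j): i-j=c\}$ oriented from large to small $i,j$, and let $\Delta_h=2^{\tau-h+1}$ be the side length of the matrices in $\EuM_h$. The centers of the matrices in $\EuM_h$ lie on a regular grid of spacing $\Delta_h$, so the centers lying on $D$ are collinear and any two of them differ by a positive multiple of $(\Delta_h,\Delta_h)$, hence by at least $(\Delta_h,\Delta_h)$ coordinatewise; if $c$ is not a multiple of $\Delta_h$ then $D$ meets no center and there is nothing to prove. The case $h=1$ is trivial since $\EuM_1=\{M^t\}$ has one center. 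Moving along $D$ in the chosen orientation increases both $i$ and $j$, so by the monotonicity underlying (P1), $r^t(P_{i,j})$ is nondecreasing and $r^{k-t}(Q_{i,j})$ is nonincreasing along $D$.

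Suppose, for contradiction, that three matrices $N_1,N_2,N_3\in\EuM_h$ have centers $v_1<v_2<v_3$ on $D$ in this order. The key point is that $N_2$ is a quadrant of a matrix $P_2\in\EuM_{h-1}$ whose center $\bar v_2=(i_0,j_0)$ was evaluated during the pruning step of phase $h-1$. Since $N_2$ has side $\Delta_h$ while $P_2$ has side $2\Delta_h$, we have $\bar v_2=v_2+(\eps_i\Delta_h/2,\eps_j\Delta_h/2)$ with $\eps_i,\eps_j\in\{-1,+1\}$; in particular $\bar v_2$ differs from $v_2$ by at most $\Delta_h/2$ in each coordinate. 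Because consecutive surviving centers on $D$ are separated by at least $(\Delta_h,\Delta_h)$, the upper-left corner of $N_3$, namely $v_3-(\Delta_h/2,\Delta_h/2)$, is coordinatewise at least $\bar v_2$, and the lower-right corner of $N_1$, namely $v_1+(\Delta_h/2,\Delta_h/2)$, is coordinatewise at most $\bar v_2$. Thus $N_3$ lies entirely in the closed lower-right quadrant of $\bar v_2$ and $N_1$ lies entirely in its closed upper-left quadrant, and this holds for every choice of the signs $\eps_i,\eps_j$, so one need not case on which quadrant of $P_2$ gives $N_2$.

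Now apply (P1) at $\bar v_2$. If $r^t(P_{i_0,j_0})\ge r^{k-t}(Q_{i_0,j_0})$, then by (P1) every cell of the lower-right quadrant of $\bar v_2$ has $m^t$-value at least $m^t_{i_0,j_0}$, so the phase-$(h-1)$ pruning at $\bar v_2$ discards every matrix of $\EuM_h^\prime$ contained in that quadrant, in particular $N_3$; in the opposite case it discards $N_1$ from the upper-left quadrant. Either way one of $N_1,N_3$ is removed before phase $h$, contradicting $N_1,N_2,N_3\in\EuM_h$. Hence $D$ carries at most two centers.

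I expect the crux to be the geometric bookkeeping of the second paragraph: the only $m^t$-value available to the pruning near $v_2$ before phase $h$ is the \emph{parent} center $\bar v_2$, which is offset from $v_2$, and the argument succeeds precisely because this offset ($\Delta_h/2$) is strictly smaller than the guaranteed separation ($\Delta_h$) between consecutive surviving centers on $D$. This factor-of-two slack is exactly what the overlapping four-way subdivision supplies, and the conclusion is robust to the orientation convention of the pruning rule, since whichever of $r^t(P_{i_0,j_0}),r^{k-t}(Q_{i_0,j_0})$ dominates at $\bar v_2$ forces the removal of one of the two flanking matrices.
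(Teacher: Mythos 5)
Your proof is correct, but it takes a genuinely different route from the paper's. The paper argues by induction on the phase: assuming the diagonal meets at most two center cells of $\EuM_h$, it bounds the number of center cells of $\EuM_h^\prime$ lying on the diagonal by four, and then argues that the pruning performed at the phase-$h$ centers eliminates all but two of them before $\EuM_{h+1}$ is formed. You avoid induction entirely: you observe that all matrices of $\EuM_h$ span dyadic blocks, so their centers sit on a grid of spacing $\Delta_h$ and consecutive centers on a monotone diagonal are separated by at least $(\Delta_h,\Delta_h)$ coordinatewise; meanwhile the middle matrix's parent center, which was actually evaluated in phase $h-1$, is offset from it by only $\Delta_h/2$ per coordinate, so the two flanking matrices lie entirely inside the two closed quadrants of that parent center, and whichever of $r^t(P_{i_0,j_0})$, $r^{k-t}(Q_{i_0,j_0})$ attains the maximum there forces one flanking matrix to have been pruned --- a contradiction. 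Your argument isolates the precise quantitative mechanism (the factor-two slack between the grid spacing $\Delta_h$ and the parent offset $\Delta_h/2$, which is exactly what the overlapping four-way subdivision provides) and sidesteps the paper's rather delicate and informally argued case analysis of how many $\EuM_h^\prime$ centers a diagonal can cross; the paper's induction, in exchange, needs only the local effect of one pruning round and never invokes the global grid alignment of survivors. Two small points: in your third paragraph ``$\EuM_h^\prime$'' should read $\EuM_{h-1}^\prime$, since phase $h-1$ prunes the children of $\EuM_{h-1}$ and its survivors constitute $\EuM_h$; and you silently use the pruning rule in the form that (P1) actually supports (prune $N$ when its minimal corner dominates the evaluated center, or its maximal corner is dominated by it) --- the corner inequalities as literally printed in the paper are reversed, so your reading is the intended one, but the discrepancy deserves a remark.
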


\begin{proof}
We show this inductively.  The base case is clearly true for the single center cell in $\EuM_1$.  
Assume it is true for $\EuM_h$, then we show it is true for $\EuM_{h+1}$.  
See Figure \ref{fig:FJ-mat}.
We consider two cases, first the diagonal passes through a center cell of $\EuM_h$.  
In this case if it passes through two center cells of $\EuM_h$, then it passes through $4$ center cells of $\EuM^\prime_h$, but the pruning step eliminates at least two of them.  
In the second case, the diagonal does not pass through a center cell of $\EuM_h$.  We can bound the number of center cells of matrices it passes through in $\EuM^\prime_h$ to $4$ using the inductive hypothesis.  
Consider one of the interior center cells $(i,j) \in N \in \EuM^\prime_h$ it passes through, neither the first not the last.  When the pruning step for the matrix in $\EuM_h$ that contains $N$ is called, it either eliminated the other matrixes in $\EuM_h^\prime$ that the diagonal passes to before or after $N$.  If the diagonal passes through 3 center cells in $\EuM_h^\prime$, then this reduces it to two, if the diagonal passes through 4 center cells, the applying this analysis to both interior matrices reduces it to two.  
\end{proof}

Lemma \ref{lem:diag-center} implies that $O(n)$ cells remain in $\EuM^\prime_h$ after the pruning step and that they can be connected by two monotone paths in $\EuM^t$, which consists of $O(n)$ cells.
Since $P_{i,j}$ differs from $P_{i-1,j}$ and $P_{i,j-1}$ by one point, we can compute $m^t_{i_l,j_l}$ for all $(i_l,j_l) \in \EuC$ using 
Lemma \ref{lem:dynC} in total time $O(k^3 n^{1+\delta})$.
%
%
Hence, each phase of the algorithm takes $O(k^3 n^{1+\delta})$ time.  

\begin{lemma}
Given $z \in Z$, $u \in U$, and $0 \leq t \leq k$, 
$\mu^t(z,u)$ can be computed in time $O(k^3 n^{1+\delta})$, for any $\delta > 0$.  
\end{lemma}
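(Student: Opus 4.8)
The plan is to take the Frederickson--Johnson phased search that is already set up and bound its total cost by charging each center-cell evaluation to the dynamic data structure of Lemma~\ref{lem:dynC}. The outer skeleton is fixed: the algorithm runs for $\tau=\lceil\log_2 n\rceil+O(1)=O(\log n)$ phases, and in phase $h$ it refines each surviving matrix of $\EuM_h$ into four overlapping submatrices, evaluates their center cells $\EuC$, and prunes via (P1). I would first record why correctness is preserved: properties (P1) and (P2) make $M^t$ monotone enough that every pruned submatrix is certified --- through the value $m^t_{i_l,j_l}$ at an already-evaluated center cell --- to contain no entry smaller than one we have already examined, so the cell attaining $\mu^t(z,u)=\min_{i,j}m^t_{i,j}$ is never discarded. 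After $\tau$ phases each surviving matrix is a single cell, and the minimum over all evaluated center cells equals $\mu^t(z,u)$.

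For the running time, the work in a phase is dominated by evaluating the center cells. Here I would invoke Lemma~\ref{lem:diag-center}: since each diagonal along which $i$ and $j$ grow together meets at most two center cells, and there are $O(n)$ such diagonals, at most $O(n)$ center cells survive in any single phase, and they can be linked by two monotone chains in $M^t$ whose total length is $O(n)$ cells. The key observation is that two cells adjacent along such a chain, $(i,j)$ and $(i{+}1,j)$ or $(i,j)$ and $(i,j{+}1)$, correspond to sets $P_{i,j}$ and $Q_{i,j}$ that differ by exactly one point. I would therefore maintain two instances of the structure of Lemma~\ref{lem:dynC}, one reporting $r^t(P_{i,j})$ and the other reporting $r^{k-t}(Q_{i,j})$; a single step along a chain triggers one insertion into one structure together with the matching deletion from the other, after which $m^t_{i,j}=\max\{r^t(P_{i,j}),r^{k-t}(Q_{i,j})\}$ is read off immediately. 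Because each chain is monotone and of length $O(n)$, the whole phase performs only $O(n)$ updates, each of cost $O(k^3 n^\delta)$ by Lemma~\ref{lem:dynC} (instantiated with violation budgets $t\le k$ and $k-t\le k$), for a per-phase bound of $O(k^3 n^{1+\delta})$.

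Summing over the $O(\log n)$ phases gives $O(k^3 n^{1+\delta}\log n)$, and the logarithmic factor is absorbed into $n^{\delta}$ by passing to a slightly larger exponent, yielding the claimed $O(k^3 n^{1+\delta})$ for any $\delta>0$. The main obstacle I anticipate is the bookkeeping that lets one monotone walk realize all the center-cell evaluations of a phase with only $O(n)$ elementary updates: I must check that the two surviving chains can be traversed so that $i$ and $j$ are each monotone along them (so the total number of one-point changes, and hence data-structure updates, is $O(n)$ per phase rather than per cell), and that the two dynamic structures remain synchronized as a point migrates between $P_{i,j}$ and $Q_{i,j}$ at every step. Everything else --- the phase count, the pruning correctness, and the $O(k^3 n^\delta)$ update cost --- follows directly from Lemmas~\ref{lem:diag-center} and~\ref{lem:dynC} together with properties (P1) and (P2).
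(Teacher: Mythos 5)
Your proposal is correct and follows essentially the same route as the paper: the Frederickson--Johnson phase structure, Lemma~\ref{lem:diag-center} to bound the surviving center cells by $O(n)$ cells linked by monotone paths, and the dynamic structure of Lemma~\ref{lem:dynC} exploiting that adjacent cells $P_{i,j}$, $P_{i-1,j}$, $P_{i,j-1}$ differ by a single point, giving $O(k^3 n^{1+\delta})$ per phase. Your explicit remarks about maintaining paired structures for $P_{i,j}$ and $Q_{i,j}$ and absorbing the $O(\log n)$ phase factor into $n^{\delta}$ merely spell out details the paper leaves implicit.
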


\paragraph{\textbf{\emph{Randomized solution.}}}
We can slightly improve the dependence on $n$ by using the dynamic data structure in Section \ref{sec:arrangementD} and (P2).  
%
As before, in the $h$th phase, for some constant $c>1$, we maintain a set $\EuM_{h}$ of at most $c2^h$ submatrices of $M^t$, each of side length $2^{\tau-h+1}+1$, and their center cells $\EuC$.  Each submatrix is divided into four submatrices of side length $2^{\tau-h}+1$, forming a set $\EuM^\prime_{h}$.  To prune $\EuM^\prime_{h}$, we choose a random center cell $(i,j)$ from $\EuC$ and evaluate $r = m^t_{i,j}$ in $O(k^3 n)$ time.  For each other center cell $(i^\prime, j^\prime) \in \EuC$, $m^t_{i^\prime,j^\prime} >r$ with probability $1/2$, and using (P2), we can remove a submatrix from $\EuM^\prime_{h}$. 
More specifically, if $m^t_{i^\prime,j^\prime}>r$, then any matrix $N \in \EuM^\prime_h$ with an lower right corner $(i^\prime, j^\prime)$ such that $i^\prime \leq i$ and $j^\prime \leq j$ or a upper left corner $(i^{\prime\prime}, j^{\prime\prime})$ such that $i^{\prime\prime} \geq i$ and $j^{\prime\prime} \geq j$, then we can prune $N$ from $\EuM_h^\prime$.  
Eppstein~\cite{Epp97} proves that by repeating this process a constant number of times, we expect to reduce the size of $\EuM^\prime_h$ to $c2^{h+1}$.


On each iteration we use the dynamic data structure described in Section \ref{sec:arrangementD}.  For $O(n)$ insertions and deletions, it can compare each center cell from $\EuC$ to $r$ in $O(k^{3} n \log^2 n)$ time.  
Thus, finding $\mu^t(z,u)$ takes expected $O(n k^{3} \log^3 n)$ time.  

\begin{lemma}
Given $z \in Z$, $u \in U$, and $0 \leq t \leq k$, 
$\mu^t(z,u)$ can be computed in expected time $O(k^3 \log^3 n)$.
\label{lem:nearR}
\end{lemma}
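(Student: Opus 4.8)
The plan is to compute $\mu^t(z,u) = \min_{i,j} m^t_{i,j}$ by an implicit search of the matrix $M^t$ using the phased Frederickson--Johnson scheme, exactly as in the deterministic algorithm, but replacing the expensive exact evaluation of every center cell by cheap threshold comparisons resolved with the dynamic data structure of Section~\ref{sec:arrangementD}. The correctness of all pruning rests on the monotonicity recorded in (P1) and (P2): along any diagonal running from large $(i,j)$ to small $(i,j)$ the two quantities $r^t(P_{i,j})$ and $r^{k-t}(Q_{i,j})$ vary monotonically in opposite directions, so a single comparison of a center cell against a threshold decides the fate of an entire quadrant of surviving submatrices. I would reuse Lemma~\ref{lem:diag-center} verbatim: after pruning, the surviving center cells of a phase lie along a bounded number of monotone chains in $M^t$, which is the structural fact that lets the dynamic structure be maintained incrementally.

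Within a phase I would follow Eppstein's randomized pivoting. I pick a center cell $(i,j)$ uniformly at random, evaluate the pivot value $r = m^t_{i,j} = \max\{r^t(P_{i,j}), r^{k-t}(Q_{i,j})\}$ once (as two static $(1,\cdot)$-center-with-violations instances), and then, for every other center cell $(i',j')$, I only need to decide whether $m^t_{i',j'} > r$. By (P2) every positive answer prunes a submatrix of $\EuM'_h$, and Eppstein's analysis shows that a constant number of independent random pivots per phase suffices, in expectation, to shrink the surviving collection by the required factor; since the side length halves each phase, the number of phases is $O(\log n)$.

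The cost engine is the threshold comparison. Deciding $r^t(P_{i',j'}) > r$ is precisely asking whether the arrangement $\EuA_{\le t}(\EuD(P_{i',j'}))$ of radius-$r$ disks centered at $P_{i',j'}$ is empty, and symmetrically for $r^{k-t}(Q_{i',j'})$; by Lemma~\ref{lem:dynD} each such emptiness test costs $O(k^3 \log n)$. Because consecutive center cells along the monotone chains of Lemma~\ref{lem:diag-center} differ by a single point in $P_{i,j}$ (and in $Q_{i,j}$), I can walk the chains and maintain both disk sets by $O(\log n)$-time insertions and deletions, paying the query cost only once per step. Carrying out the comparisons of a phase against the random pivot this way takes $O(k^3 \log^2 n)$ expected time, and combining this with the $O(\log n)$ phases of the search yields the claimed expected running time of $O(k^3 \log^3 n)$.

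The step I expect to be the main obstacle is not any single comparison but the bookkeeping that keeps each comparison cheap. I must argue that the surviving center cells of every phase can indeed be linearized into a bounded family of monotone chains along which $P_{i,j}$ and $Q_{i,j}$ change by exactly one point per step, so that the $O(\log n)$ update bound of Lemma~\ref{lem:dynD} applies, and that Eppstein's expected geometric shrinkage of $\EuM'_h$ under random pivoting remains valid when the oracle reports only the comparison $m^t_{i',j'} > r$ rather than the exact value. Once these two points are established, multiplying the per-phase expected cost by the logarithmic number of phases gives the lemma.
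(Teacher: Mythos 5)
Your overall approach coincides with the paper's: Frederickson--Johnson phases over $M^t$, a random pivot cell per phase \`{a} la Eppstein, threshold comparisons $m^t_{i',j'} > r$ answered by the dynamic structure of Lemma~\ref{lem:dynD} (an emptiness test on $\EuA_{\le t}$ of radius-$r$ disks, plus the symmetric test for $Q_{i',j'}$), and pruning of $\EuM^\prime_h$ via (P2). That part is sound and is exactly what the paper does.

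The genuine gap is in your cost accounting, and it hides a factor of $n$. First, consecutive \emph{center cells} of a phase do not differ by a single point: in phase $h$ they are $2^{\tau-h}$ apart in each coordinate, so stepping from one center cell to the next along the monotone paths costs that many unit insertions/deletions, and over a whole phase the walk traverses $O(n)$ cells of $M^t$, i.e., $O(n)$ updates at $O(\log n)$ each. Your claimed per-phase cost of $O(k^3\log^2 n)$ is therefore unobtainable. Second, each exact pivot evaluation $r = m^t_{i,j}$ is two runs of Matou\v{s}ek's LP-type-with-violations algorithm and costs $O(k^3 n)$ by itself, already exceeding your claimed total of $O(k^3\log^3 n)$. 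The paper's own tally is $O(k^3 n \log^2 n)$ per phase and $O(k^3 n \log^3 n)$ expected overall for fixed $(z,u,t)$; this is also the only bound consistent with the final theorem, since $O(k^3)$ choices of $z$, $O(1)$ directions $u$, and $k+1$ values of $t$ multiply it to $O(k^7 n \log^3 n)$. The statement of Lemma~\ref{lem:nearR} as printed omits the factor $n$ --- a typo you appear to have inherited and then reverse-engineered an analysis toward. With the walk and pivot costs charged correctly, the rest of your argument (pruning against a pivot that is only compared, never recomputed, and Eppstein's constant-number-of-pivots shrinkage of $\EuM^\prime_h$ per phase) goes through and matches the paper, but it proves the $O(k^3 n \log^3 n)$ bound, not the stated one.
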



\paragraph{\textbf{\emph{Putting pieces together.}}}
By repeating either above algorithm for all $0 \leq t \leq k$ and for all pair $(z,u) \in Z \times U$, we can compute a $(2,k)$-center of $P$ that is optimal if $D_1$ and $D_2$ are nearly concentric.  Combining this with Lemma \ref{lem:far}, we obtain the main result of the paper.

\begin{theorem}
Given a set $P$ of $n$ points in $\b{R}^2$ and an integer $k \geq 0$, an optimal $(2,k)$-center of $P$ can be computed in $O(k^7 n^{1+\delta})$ (deterministic) time, for any $\delta>0$ 
or in $O(k^7 n \log^3 n)$ expected time.
\end{theorem}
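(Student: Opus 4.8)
The plan is to prove the theorem by assembling the two halves developed in Sections~3 and~4, after observing that they exhaust all possibilities for the optimal pair. Let $D_1, D_2$ be an optimal $(2,k)$-center of $P$ with centers $c_1, c_2$ and radius $r^*$. Exactly one of the conditions $\|c_1 c_2\| \geq r^*$ or $\|c_1 c_2\| \leq r^*$ must hold, so the optimal solution is either well separated or nearly concentric. Since we do not know a priori which case applies, the first step is simply to run both algorithms on $P$ and return the smaller of the two radii produced. Correctness then reduces to verifying that whichever case the true optimum falls into, the corresponding algorithm recovers a $(2,k)$-center of radius exactly $r^*$.

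For the well-separated case I would invoke Lemma~\ref{lem:far} directly: it guarantees that if an optimal $(2,k)$-center is well separated, then the optimization algorithm over the $O(k^2)$ candidate separator lines of $L$ (Lemma~\ref{lem:separator-k2}) returns the optimal radius in $O(nk^6 \log^2 n)$ time. No further argument is needed here beyond citing that lemma, since its proof already establishes that at least one line of $L$ is a genuine separator for the optimum.

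The bulk of the work is the nearly concentric case. First I would generate the set $\EuZ$ of $O(k^3)$ candidate intersector points in $O(nk^3)$ time, guaranteed to contain an intersector point $z$ of $D_1 \cap D_2$. For each such $z$ I fix the set $U \subseteq \b{S}^1$ of $h = O(1)$ directions, so that for at least one $u \in U$ the line through $z$ in direction $u$ separates the two rays $\rho^+, \rho^-$ toward $\partial D_1 \cap \partial D_2$. The key structural claim is that for this correct pair $(z,u)$, if $\rho^+, \rho^-$ fall between consecutive points $p_a^+, p_{a+1}^+$ and $p_b^-, p_{b+1}^-$, then $P \cap (D_1 \cup D_2) = (P_{a,b} \cap D_1) \cup (Q_{a,b} \cap D_2)$, whence $r^* = m_{a,b}^t$ for $t = |P_{a,b} \setminus D_1|$; thus $r^*$ equals $\min_{z,u,t} \mu^t(z,u)$. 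It then suffices to compute each $\mu^t(z,u)$ by searching the matrix $M^t$, using the monotonicity properties (P1) and (P2) to prune, which by Lemma~\ref{lem:nearR} costs expected $O(nk^3 \log^3 n)$ per triple (and $O(k^3 n^{1+\delta})$ deterministically). Multiplying by the $O(k^3)$ choices of $z$, the $O(1)$ choices of $u$, and the $O(k)$ choices of $t$ yields $O(k^7 n \log^3 n)$ expected (resp.\ $O(k^7 n^{1+\delta})$ deterministic) time for this case, which dominates the well-separated bound.

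The main obstacle I anticipate is the correctness of the nearly concentric reduction rather than the time accounting. Two things must be pinned down carefully: that the $O(k^3)$-point set $\EuZ$ truly contains an intersector point (relying on the constant-fraction area argument for nearly concentric disks via Eppstein's selection), and that the matrix search over $M^t$ actually returns $\min_{i,j} m_{i,j}^t$ — this needs the saddle-like monotonicity in (P1)/(P2) together with Lemma~\ref{lem:diag-center} to ensure the pruning never discards the optimal cell. Once the partition identity $P \cap (D_1 \cup D_2) = (P_{a,b} \cap D_1) \cup (Q_{a,b} \cap D_2)$ is established for the correct $(z,u)$, the rest is a routine combination: take the overall minimum radius across both cases and observe that the optimum, belonging to one of the two cases, is realized, giving the stated bounds.
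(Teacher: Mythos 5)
Your proposal is correct and follows essentially the same route as the paper: split on $\|c_1 c_2\| \geq r^*$ versus $\|c_1 c_2\| \leq r^*$, invoke Lemma~\ref{lem:far} for the well-separated case, and for the nearly concentric case enumerate the $O(k^3)$ candidate intersector points, $O(1)$ directions, and $O(k)$ values of $t$, computing each $\mu^t(z,u)$ via the monotone matrix search of Lemma~\ref{lem:nearR}, which yields the stated $O(k^7 n \log^3 n)$ expected (or $O(k^7 n^{1+\delta})$ deterministic) bound. This matches the paper's ``putting pieces together'' argument, including the time accounting.
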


\section{The $(p,k)$-Center Problem Under the $\ell_\infty$ Metric}

This section focuses on the $\ell_\infty$ version of the $(p,k)$-center problem, and hence all references to the $(p,k)$-center problem herein are referring to the $\ell_\infty$ variant.  We use extensively that for $p \leq 3$, the $(p,0)$-center problem is LP-type~\cite{SW96}, and thus for $p\leq 3$ the $(p,k)$-center problem can be solved in $O(k^{O(1)} n)$ time.  We also use that if all points lie in $\b{R}^1$, then the $(p,0)$-center problem is LP-type for any $p>0$, with combinatorial dimension $O(p)$, and thus in $\b{R}^1$, the $(p,k)$-center problem can be solved in $O(k^{O(p)}n)$ time.  

Like in the $\ell_2$ variant, we first study the decision version of the dual problem; here an arrangement of unit squares.  
Let $\EuS = \{S_1, \ldots, S_n\}$ be a set of $n$ unit squares (side length 1) in $\b{R}^2$.  Let $\EuA(\EuS)$ be the arrangement of $\EuS$.  We say a point $q$ \emph{stabs} a square $S \in \EuS$ if $q \in S$.  Let $\EuS(q) \subset \EuS$ be the set of squares stabbed by $q$.  

We seek to determine whether there exists a placement of $p$ points $\EuQ = \{q_1, \ldots, q_p\}$ such that $\left| \bigcup_{q \in \EuQ} \EuS(q) \right| \geq n-k$.  
We refer to this as the $(p,k)$-stabbing decision problem.  
All of our algorithms also return a solution if one exists.  
By replacing each point in the $(p,k)$-center problem with a unit square centered at that point, then the $p$ stabbing points of the $(p,k)$-stabbing decision problem serve as the center points of unit squares that contain $n-k$ of the original point set.

\paragraph{Structure.}
We start by reviewing structure observed by Sharir and Welzl~\cite{SW96} about the $(p,0)$-stabbing decision problem.  

If a horizontal or vertical line $\ell$ passes through all $S \in \EuS$, then this $(p,0)$-center decision problem reduces to a variant in $\b{R}^1$ because any stabbing point $q$ can be replaced with $q^\prime$, the closest point on $\ell$ to $q$, so that $\EuS(q) \leq \EuS(q^\prime)$.  We can then solve the $(p,0)$-stabbing decision problem in $O(n)$ time or the $(p,k)$-center problem in $O(n k^{O(p)})$ time.  
We henceforth assume that this is not the case.  

Let $\ell^L$ describe the line passing through the right boundary of the leftmost square.  Similarly, let $\ell^R$ (resp. $\ell^T$, $\ell^B$) describe the line passing through the left (resp. bottom, top) boundary of the rightmost (resp. topmost, bottommost) square.  
Let $H_0$ describe the rectangle bounded on its left side by $\ell^L$, its right side by $\ell^R$, its bottom side by $\ell^B$, and its top side by $\ell^T$.  (See Figure \ref{fig:rect4}.)
$H_0$ must have positive area otherwise a horizontal or vertical line would pass through the set of all squares.  

Let $H_0 \cap \ell^X$, for $X \in \{L,R,T,B\}$, describe the four boundary segments of $H_0$.  Call the intersection of two boundary segments a corner of $H_0$.  
If the $(p,0)$-stabbing decision problem has a solution, we claim that each boundary segment of $H_0$ contains a stabbing point in a solution of the $(p,0)$-stabbing decision problem (in particular, the solution of $p$ stabbing points contained in the smallest rectangle).  For instance, if $H_0 \cap \ell^L$ does not contain a stabbing point, then we can replace the point $q$ stabbing the leftmost square with another point $q^\prime$ on $H_0 \cap \ell^L$ such that $\EuS(q) \leq \EuS(q^\prime)$.  

If a stabbing point $q$ lies on corner, it lies on two boundary segments at once, and we can set $\EuS^\prime = \EuS \setminus \EuS(q)$ and then solve the $(p-1,k)$-stabbing decision problem on $\EuS^\prime$.  Of course, we don't know which corner is a stabbing point, but there are a constant number and we can try them all.  

Define $\ell^L_{j^L}$ (resp. $\ell^R_{j^R}$, $\ell^T_{j^T}$, $\ell^B_{j^B}$) as the line through the right (resp. left, bottom, top) boundary of the $j^L$th leftmost (resp. $j^R$th rightmost, $j^T$th topmost, $j^B$th bottommost) square.  
We can also define the rectangle $H_{j^L, j^R, j^T, j^B}$ which is defined by the intersection of halfspaces defined by lines $\ell^L_{j^L}$, $\ell^R_{j^R}$, $\ell^T_{j^T}$, and $\ell^B_{j^B}$.  
We actually want to be slightly careful since one square may be in the $j^L$th leftmost and $j^T$th topmost squares.  We count squares first from left and right, then those remaining from top and bottom.  
Let $\EuS_{j^L, j^R, j^T, j^B}$ be the set of squares which intersect $H_{j^L, j^R, j^T, j^B}$.

\paragraph{Dynamic data structure.}
We will need a data structure to be able to maintain $H_0$ and $\EuS$ under the removal of the set $\EuS(q)$ for a possible stabbing point $q$.  Sharir and Welzl \cite{SW96} provide a data structure that stores a set of canonical subsets, such that under this operation $\EuS \setminus \EuS(q)$ can be stored as the union of $O(\log n)$ (not necessarily disjoint) canonical subsets.  The new boundary lines of $H_0$ can be constructed in $O(\log n)$ time from the $O(\log n)$ subsets.  

The structure is built, and extended to handle outliers, as follows.
In the $x$- and $y$-directions store binary trees of $\EuS$ sorted by their coordinates.  Each node in the tree stores a canonical subset of all squares in its subtree.  For a query point $q$, we can return all squares that cannot intersection $q$ based on $x$- and $y$-coordinates independently, as a set of $O(\log n)$ canonical subsets each.  The union is $\EuS \setminus \EuS(q)$.  
We may need to build this data structure $p-1$ levels deep on each canonical subset for solving the $(p,k)$-stabbing decision problem.  
To construct $H_0$ quickly, we can find the maximum and minimum square in $x$ and $y$ coordinate over all $O(\log n)$ canonical subsets.  To instead construct $H_{j^L, j^R, j^T, j^B}$, we can find the $j^L$ minimum $x$ coordinate in $O(j^L \log n)$ time and similarly for $j^R$, $j^T$, and $j^B$; thus constructing $H_{j^L, j^R, j^T, j^B}$ can be done in $O(k \log n)$ time, where $j^L, j^R, j^T, j^B \leq k$.

\subsection{The $(4,k)$-Stabbing Decision Problem}

First we choose positive integral values $j^L$, $j^R$, $j^T$, and $j^B$ such that $j^L + j^R + j^T + j^B \leq k+4$ and create $H_{j^L, j^R, j^T, j^B}$.  
If $j^L$, $j^R$, $j^T$, and $j^B$ are chosen correctly, then $H_{j^L, j^R, j^T, j^B}$ is the smallest rectangle that contains the 4 stabbing points.  If the decision is true, then one of this set of $O(k^4)$ rectangles must match the solution because it can not exclude more than $k$ rectangles in any one direction.
In what follows, we assume we have chosen $j^L, j^R, j^T, j^B$ correctly, but in the full algorithm we try each until we find a solution.  
If $H_{j^L, j^R, j^T, j^B}$ has non positive area then we can solve the problem in $\b{R}^1$.  We then see if one of the corners, $q$, of $H_{j^L, j^R, j^T, j^B}$ can be a stabbing point by solving the $(3,k-(j^L+j^R + j^T + j^B-4))$-stabbing decision problem on $\EuS_{j^L, j^R, j^T, j^B} \setminus \EuS(q)$.  If the answer is negative for each corner, and we assume that we have chosen $j^L$, $j^R$, $j^T$, $j^B$ correctly, then each boundary segment of $H_{j^L, j^R, j^T, j^B}$ must contain a distinct stabbing point.  
Let $\EuS^I \subset \EuS_{j^L, j^R, j^T, j^B}$ be the subset so that each $S \in \EuS^I$ does not intersection $\partial H_{j^L, j^R, j^T, j^B}$ --- these squares must be totally contained in $H_{j^L, j^R, j^T, j^B}$.  Let $k^I = |\EuS^I|$.  

In the following we assume that $H_{j^L, j^R, j^T, j^B}$ is the smallest rectangle to contain all stabbing points and to simplify notation we set $\EuS^\prime = \EuS_{j^L, j^R, j^T, j^B} \setminus \EuS^I$, $\kappa = k-(j^L+j^R + j^T + j^B-4) - k^I$, and $H = H_{j^L,j^R,j^T,j^B}$.  Finally, we assume that the solution to the $(p,\kappa)$-stabbing decision problem on $\EuS^\prime$ has no point on the corners of $H$.

\begin{figure}[h!!t]
  \centering
  \includegraphics{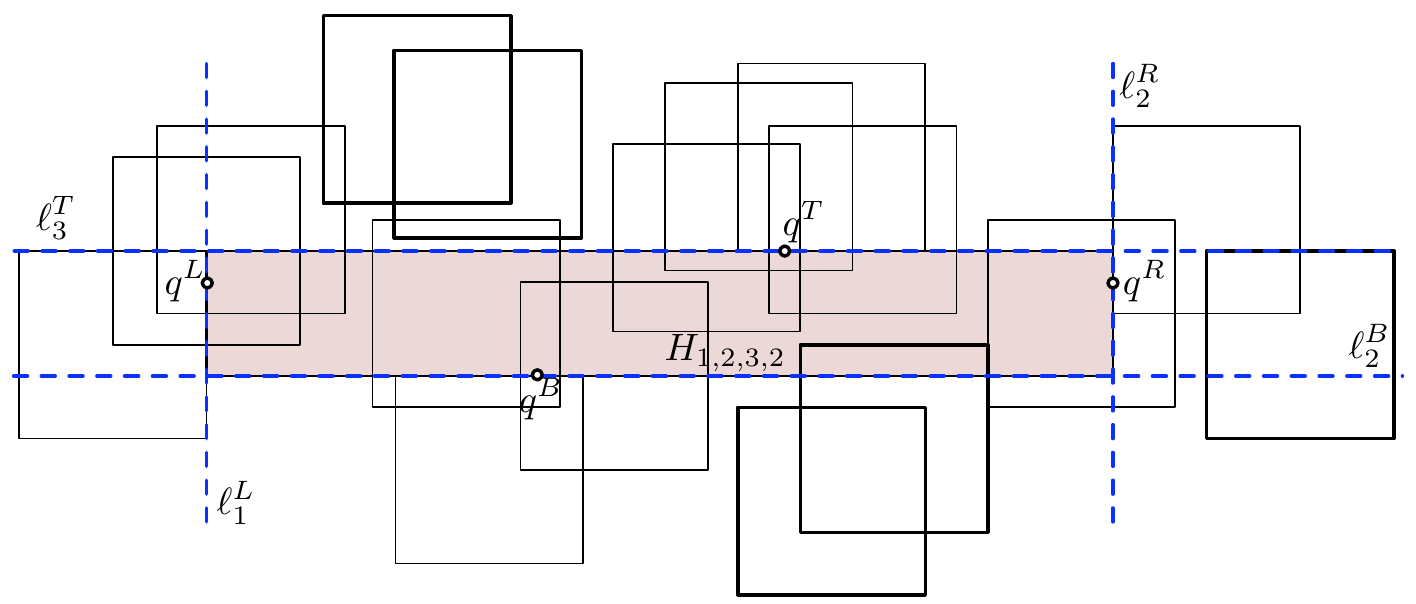}
  \caption{\label{fig:rect4}
                Structure of a $4$-center problem with $\ell_\infty$-distance.  Rectangle $H = H_{1,2,3,2}$ is shaded and bounded by lines $\ell^T, \ell^R, \ell^B, \ell^L$ on the top, right, bottom, and left sides, respectively.  The four centers appear on the four sides of $H$ labeled.  There are five outliers squares shown in bold.}
\end{figure}

\paragraph{4 Rotating Calipers.}
We can now apply a rotating calipers type technique with four calipers, with one point on each edge of $H$.  Since each square can intersect each edge of $H$ at most twice, the boundary of $H$ is divided into $O(n)$ regions such that all points within a region of the boundary intersect the same set of squares.  Squares can intersect more than one edge of $H$, either by also containing a corner point (i.e. left and top), intersect two opposite sides (i.e. left and right), or both (i.e. left, top, and right).  In the third case when a square intersects three sides it must entirely contain one of those sides, and thus any point chosen on that side must stab that square and we can ignore it.  Also only one pair, w.l.o.g. top and bottom, can have squares intersecting both, otherwise both pairs of opposite sides are shorter than a distance $1$, and any square intersecting a pair of opposite sides must entirely contain one of the other sides.  Assuming the top and bottom edges are longer than 1 (so no square can intersect both the left and right edge) we consider two cases: where the point on the top side is right of the point on the bottom side, and vice versa.  We focus on the first case and handle the other one symmetrically.  

We treat the subset of squares $\EuS_2 \subseteq \EuS^\prime$ which intersect the top and bottom edges separately from the subset $\EuS_E = \EuS^\prime \setminus \EuS_2$ of the ones that only intersection only one edge or two adjacent edges.  
Each square $S \in \EuS_E$ describes one interval on the curve defined by $\partial H$.  Thus, given a placement of four stabbing points, one on each boundary side, the squares from $\EuS_E$ which are not stabbed lie in one of four intervals of $\partial H$ bounded by the stabbing points.  In the optimal solution let there be $i^R$ unstabbed squares in $\EuS_E$ between the $q^R$ and $q^T$, $i^T$ squares between $q^T$ and $q^L$, $i^L$ squares between $q^L$ and $q^B$, and $i^B$ squares between $q^B$ and $q^R$.  For any values $i^L$, $i^R$, $i^T$, and $i^B$ we can determine if there is placement of the stabbing points on $\partial H$ that has exactly those many unstabbed squares in the associated intervals.  
Given a placement of $q^R$ in bottommost region of the right boundary edge, we can try to place $q^T$ skipping $i^R$ squares, then place $q^L$ skipping $i^T$ squares, and finally $q^B$ skipping $i^L$ squares.  If there are $i^B$ squares remaining it is successful.  If it is not successful at any placement step, then we shift $q^R$ to the next region up on the right boundary and try shifting the other stabbing points to the next region in a counter-clockwise direction to satisfy the constraints.  If all attempts are unsuccessful for all placements of $p^R$ on the right edge, then this choice of $i^R$, $i^L$, $i^B$, and $i^T$ is incorrect.  Since there are only $O(n)$ regions, and each stabbing point is in each region at most once, since they only move counter-clockwise, this takes $O(n)$ time.

Once a solution for $\EuS_E$ has been found, we attempt find a solution for $\EuS_2$.  These squares can be sorted left to right and a successful stabbing will have $i^1$ unstabbed squares from $\EuS_2$ left of $q^B$, $i^2$ squares between $q^B$ and $q^T$, and $i^3$ squares right of $q^T$, for some nonnegative integers $i^1, i^2, i^3$.  We can now adjust $q^B$ and $q^T$ such that the sets $\EuS_E(q^B)$ and $\EuS_E(q^T)$ do not change.  The boundaries of the squares from $\EuS_2$ divide the regions into intervals so that within a interval $\EuS_2(q^B)$ and $\EuS_2(q^T)$ do not change.  After preprocessing to find the left boundary of the rightmost square in $\EuS_2$ and the right boundary of the leftmost square in $\EuS_2$, in $O(i^1 + i^3)$ time we check if we can place $q^B$ and $q^T$ to satisfy $i^1$ and $i^3$.  
Quickly checking the $i^2$ constraint requires preprocessing on the intervals created by the sorted ordering of $\EuS_2$ so each region contains the number of points stabbed and the number of unstabbed squares to the right.  Thus if $q^T$ is in a region so that it stabs $s$ squares and there are $r$ squares to the right of the region that $q^B$ is in, then there are $r - s - i^3$ unstabbed squares from $\EuS_2$ between $q^B$ and $q^T$.  If $r-s-i^3 \leq i^2$ then we return true, if not we go back to dealing with $\EuS_E$ and shift the stabbing points in counter-clockwise order.  

Although, we do not know the values of $i^R$, $i^L$, $i^T$, $i^B$, $i^1$, $i^2$, and $i^3$ we do know that $i^R + i^L + i^T + i^B + i^1 + i^2 + i^3 = \kappa$, thus there are only $O(k^6)$ possible values.  
For each set of values, we require $O(n)$ time to handle $\EuS_E$ and for each step $O(k)$ time to handle $\EuS_2$, after preprocessing.   
Let $T_\infty(n, p,k)$ be the required time for the algorithms described above to solve the $(p,k)$-stabbing decision problem on $n$ unit squares.  

\begin{lemma}
$T_\infty(n, 4, k) = O(k^4(T_\infty(n,3,k) + k^{11} n) + n \log n)$ or just $O(k^4(T_\infty(n,3,k) + k^{11} n))$ if the squares are presorted along the $x$- and $y$-axis.  
\end{lemma}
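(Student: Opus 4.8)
The plan is to establish the stated recurrence by charging the total running time to three separate sources and summing them. First I would argue that the outer enumeration is both correct and cheap: the lines $\ell^L_{j^L}, \ell^R_{j^R}, \ell^T_{j^T}, \ell^B_{j^B}$ with $j^L + j^R + j^T + j^B \le k+4$ generate $O(k^4)$ candidate rectangles $H_{j^L,j^R,j^T,j^B}$, and since a valid $(4,k)$-stabbing may leave at most $k$ squares unstabbed — hence at most $k$ squares can be cut off past any single one of the four bounding lines — the smallest rectangle enclosing the four stabbing points of a YES-instance must coincide with one of these $O(k^4)$ candidates. Constructing each candidate costs $O(k\log n)$ using the dynamic data structure of the previous subsection, a lower-order term. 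It therefore suffices to bound, for one fixed rectangle $H = H_{j^L,j^R,j^T,j^B}$ assumed to be the correct smallest rectangle, the cost of deciding whether a consistent placement of four stabbing points exists, and then multiply by $O(k^4)$.

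For a fixed $H$ I would split into the two cases already isolated in the text. In the first case some stabbing point lies on a corner of $H$; trying each of the $O(1)$ corners $q$, removing $\EuS(q)$, and recursing on the resulting $(3,\kappa)$-stabbing instance over the remaining squares costs $O(T_\infty(n,3,k))$ per rectangle, contributing the $O(k^4 T_\infty(n,3,k))$ term. In the complementary case each side of $H$ carries a distinct stabbing point, and I would run the four-caliper sweep. Here I would first justify the structural reductions: a square meeting three sides of $H$ must fully contain one side and can be discarded, and at most one pair of opposite sides (say top and bottom) can be long enough to admit squares $\EuS_2$ meeting both, so that the remaining squares $\EuS_E = \EuS^\prime \setminus \EuS_2$ each induce a single interval on $\partial H$. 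Guessing the seven skip-counts $i^R, i^T, i^L, i^B, i^1, i^2, i^3$ summing to $\kappa$ gives $O(k^6)$ tuples; for each tuple the $\EuS_E$ test is a single monotone counter-clockwise rotation of the four calipers costing $O(n)$, because each caliper enters each of the $O(n)$ boundary regions at most once, and the $\EuS_2$ test costs $O(k)$ per candidate placement after linear preprocessing. Collecting the $O(k^4)$ rectangles, the $O(k^6)$ tuples, and the per-tuple sweep, together with the additive $O(n\log n)$ for sorting the squares by $x$- and $y$-coordinate (removed when the input is presorted), yields the stated bound; the exponent of $k$ is deliberately not optimized.

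The main obstacle is the correctness of the caliper sweep for a fixed skip-count tuple, rather than its running time. I would need to show that letting all four stabbing points advance only counter-clockwise — placing $q^T$ after skipping $i^R$ squares of $\EuS_E$, then $q^L$ after $i^T$, then $q^B$ after $i^L$, and checking that exactly $i^B$ remain — is guaranteed to discover a feasible placement whenever one exists, so that the monotone motion of the coupled pointers never skips over a solution; this is the familiar rotating-calipers monotonicity argument, but it must be verified for all four pointers simultaneously. The second delicate point is that the handling of $\EuS_2$ composes correctly with that of $\EuS_E$: since $q^B$ and $q^T$ can be shifted within a region without changing $\EuS_E(q^B)$ or $\EuS_E(q^T)$, I would argue that the $i^2$ count is read off in $O(1)$ from precomputed prefix information on the sorted $\EuS_2$, and that exhausting the two sub-cases (top point right of bottom point and its mirror) together with the corner case covers every possible configuration of an optimal solution.
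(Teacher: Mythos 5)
Your proposal follows essentially the same route as the paper's proof: the same $O(k^4)$ enumeration of rectangles $H_{j^L,j^R,j^T,j^B}$ with $j^L+j^R+j^T+j^B\le k+4$, the same corner-case reduction to a $(3,\kappa)$-stabbing instance, and the same four-caliper sweep over $O(k^6)$ skip-count tuples with the $\EuS_E$/$\EuS_2$ split ($O(n)$ sweep steps, $O(k)$ work per step after preprocessing), plus the additive $O(n\log n)$ sorting term. Your per-rectangle accounting of $O(k^7 n)$ sits comfortably inside the stated $O(k^{11}n)$ bound (the paper does not optimize the exponent of $k$), and the caliper-monotonicity obligation you flag is likewise asserted rather than proved in the paper, so your write-up matches it in both structure and level of rigor.
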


\begin{theorem}
$T_\infty(n, 4, k) = O(k^{O(1)} n + n \log n)$ or just $O(k^{O(1)}n)$ if the squares are presorted along the $x$- and $y$-axis.  
\label{thm:rect4-dec}
\end{theorem}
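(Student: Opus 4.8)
The plan is to derive the theorem as an immediate consequence of the preceding lemma, which already expresses $T_\infty(n,4,k)$ in terms of the cost $T_\infty(n,3,k)$ of the $(3,k)$-stabbing decision problem. The only additional ingredient needed is a bound on this base case, together with an accounting of where the $n \log n$ term comes from.

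First I would recall, as noted in the introduction, that for $p \leq 3$ the $(p,0)$-center problem under the $\ell_\infty$ metric is LP-type with constant combinatorial dimension~\cite{SW96}. Applying Matou\v{s}ek's framework~\cite{Mat95} for LP-type problems with $k$ violations then yields
$$T_\infty(n,3,k) = O(k^{O(1)} n).$$
Next I would substitute this into the recurrence supplied by the lemma,
$$T_\infty(n,4,k) = O\bigl(k^4 (T_\infty(n,3,k) + k^{11} n) + n \log n\bigr),$$
to obtain
$$T_\infty(n,4,k) = O\bigl(k^4 (k^{O(1)} n + k^{11} n) + n \log n\bigr) = O(k^{O(1)} n + n \log n).$$
Finally, I would observe that the lone $n \log n$ term arises solely from the initial sorting of the squares of $\EuS$ by their $x$- and $y$-coordinates, which is required to build the dynamic data structure of Sharir and Welzl and to construct the rectangles $H_{j^L,j^R,j^T,j^B}$; every other contribution is already of the form $k^{O(1)} n$. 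Hence, if the squares are presorted along both axes this term disappears, giving the claimed $O(k^{O(1)}n)$ bound.

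Since the theorem is a direct specialization of the lemma, I do not expect any real obstacle: the substantive work lies entirely in establishing the recurrence itself through the four-caliper argument, reducing a $(4,k)$ instance to $O(k^4)$ many $(3,k)$ instances (one per choice of $j^L,j^R,j^T,j^B$) plus the $O(k^6)$ choices of the interval parameters handled in $O(n)$ time each. The one point requiring care is verifying that the $O(n \log n)$ cost is confined to preprocessing, so that it is genuinely the only term not of the form $k^{O(1)} n$ and therefore the only term eliminated by the presorting assumption.
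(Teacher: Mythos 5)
Your proposal is correct and matches the paper's (implicit) argument: the theorem is stated immediately after the lemma precisely because it follows by substituting the base case $T_\infty(n,3,k) = O(k^{O(1)}n)$ --- which the paper establishes at the start of Section 5 via the LP-type result of Sharir--Welzl combined with Matou\v{s}ek's framework for $k$ violations --- into the lemma's recurrence, with the $n\log n$ term attributable solely to the initial sorting. Your additional care in confirming that the sorting cost is the only non-$k^{O(1)}n$ contribution is exactly the right check for the presorted variant.
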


\subsection{The $(5,k)$-Stabbing Decision Problem}
We first construct $O(k^4)$ rectangles $H = \partial H_{j^L, j^R, j^T, j^B}$ as above.  To simplify notation, also assume that $k$ squares intersect or lie inside of $H$ and that at least one center must lie on each side of $H$.  We now have to consider $3$ cases.  

First, one of the centers lies on a corner of $H$.  In this case, we can try all corners, remove the squares that intersect that corner and apply the algorithm for $p=4$ on the remaining squares.  

Second, all $5$ of the centers lie on the rectangle $H$ (not its interior), but none lie on a corner.  
Third, $4$ centers lie on the boundary of $H$, but none lie on a corner and the fifth center lies in the interior of $H$.  
These cases are more complicated and requires the dynamic data structure described above.

In the second and third case we choose non-negative integers $i_1$ through $i_9$ such that $i_1 + i_2 + i_3 + i_4 + i_5 + i_6 + i_7 + i_8 + i_9 = k$.  These determine which points are outliers and not contained in the $5$ centers.  We choose $O(k^8)$ sets of integers and complete the following for each set.  For what follows we assume we have chosen the correct set.  
Each side of $H$ has at least one center, and one side has two.  We perform the following, assuming each side, in turn, has two centers; w.l.o.g. let it be the right side.
Guess that the left most interval on the bottom edge of $H$ contains the center point, $p^B$.  
Let $\EuS_{p^B} = \EuS \setminus \EuS(p^B)$ be the set of squares that do not contain $p^B$.  Using the above 4-level dynamic data structure, obtain $\EuS_{p^B}$ and construct $H^\prime = H_{1, 1, 1, 1+i_1}$ on $\EuS_{p^B}$.  The bottommost $i_1$ remaining squares have been designated as outliers, not to contain any center point.  
Now either the bottom left or the bottom right corner of $H^\prime$ must contain a center point.  
Check each case by the following; w.l.o.g. assume its the bottom right corner, $p^R$.  Create $\EuS_{p^B} \setminus \EuS_{p^B}(p^R)$ using the dynamic data structure, and recalculate $H_{1, 1+i_2, 1, 1+i_3}$.  
Again the $i_2$ rightmost squares and $i_3$ bottommost squares are designated outliers.  
Now again either the bottom right or bottom left corner of $H$ must be a center point.  Check either, remove $i_4$ and $i_5$ outliers, and proceed as before removing the squares contained in the third center.  
This process repeats once more, removing $i_6$ and $i_7$ outliers, and squares containing the fourth center.  There is now one center left to place.  We remove $i_8$ and $i_9$ outliers and can easily check if the last center can contains all remaining squares.  If it cannot then we update $p^B$ by sliding it to the next interval on the bottom edge of $H$.  We update our 4-level data structure in $O(\log^4 n)$ time.  This repeats until either all squares can be stabbed by the last center, meaning the result is true, or all intervals on the bottom edge of $H$ have been tried, meaning the result is false.  

Accounting for the $O(k^4)$ possible outliers to create the initial rectangle $H$, and the $O(k^8)$ sets of integers $i_1 \ldots i_9$, the final running time is $O(n k^{12} \log^4 n)$.  

The third case is very similar to the second case.  We consider a case where $p^I$, the center on the interior of $H$, is above either the point on the left side or the right side of $H$.  If this is not true, we would perform the process symmetrically by guessing a center on the top side instead of the bottom side.  We can remove squares containing the first two center points the same way as in the second case.  When there are three center points remaining, we can still claim that one lies on the corner of $H$, but its not necessarily a bottom corner.  This just requires a few more cases to check.  It follows that this third case also takes $O(n k^{12} \log^4 n)$ time.

\subsection{The $(4,k)$- and $(5,k)$-Center Problem}
We can solve the original primal problem of determining whether a set of $p$ squares can contain all but $k$ points from an $n$ point set.  To find the minimum side length of the squares for this to be true we can use a matrix searching technique of Frederickson and Johnson \cite{FJ82,FJ83,FJ84} with $O(\log n)$ iterations of the above algorithm.  The minimal side length of a square is necessarily the difference in $x$-coordinates between two points or the difference in $y$-coordinates between two points.  We implicitly store these two orderings along the columns of two matrices, $X$ and $Y$, corresponding to the $x$- and the $y$-coordinates of the points.  The cells contain the differences in their values, but are only computed as needed.  Using monotone properties of these matrices we can search for the minimum such difference where our algorithm returns true.  We take the minimum from both matrices.  

\begin{theorem}
Given a set $P$ of $n$ points in $\b{R}^2$ and an integer $k \geq 0$, an optimal $(4,k)$-center of $P$ can be computed under the $\ell_\infty$-metric in $O(k^{O(1)} n \log n)$ time.
\end{theorem}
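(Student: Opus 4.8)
The plan is to combine the decision procedure for the $(4,k)$-stabbing problem (Theorem \ref{thm:rect4-dec}) with a parametric search over the candidate side lengths via the Frederickson--Johnson matrix-searching technique \cite{FJ82,FJ83,FJ84}. First I would observe that in the optimal $(4,k)$-center under $\ell_\infty$, the common side length $s^*$ of the four squares is determined by two of the input points: either $s^*$ equals the difference of the $x$-coordinates of two points of $P$, or the difference of their $y$-coordinates. This is because shrinking $s$ until the covering just fails means some square boundary passes exactly through a critical point, and in the $\ell_\infty$ metric the governing constraint is a coordinate difference. Thus $s^*$ lies in a set of $O(n^2)$ candidate values implicitly arranged as the cells of two matrices $X$ and $Y$, where rows and columns are indexed by the sorted $x$- (resp. $y$-) coordinates and cell $(i,j)$ holds the difference of the $i$th and $j$th coordinate.

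Next I would set up the monotonicity needed for matrix search. If we sort the coordinates, each matrix $X$ and $Y$ has the property that entries are monotone nondecreasing along rows and columns (each is a difference of a sorted pair), so they are Monge-searchable in the Frederickson--Johnson sense. The decision oracle is: given a candidate side length $s$, does there exist a $(4,k)$-center of $P$ with squares of side $s$? By the reduction at the start of Section~4 (replacing each point by a unit --- here, side-$s$ --- square centered at it and solving the $(4,k)$-stabbing decision problem), this is exactly $T_\infty(n,4,k)$, which by Theorem \ref{thm:rect4-dec} runs in $O(k^{O(1)} n)$ time once the squares are presorted, plus an $O(n\log n)$ presorting cost. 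I would presort the points along both axes once, so that every subsequent oracle call avoids the extra $\log n$ factor.

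The core step is the matrix search itself. The Frederickson--Johnson procedure locates the smallest matrix entry for which the decision oracle returns ``true'' using only $O(\log n)$ oracle invocations, by repeatedly discarding rows/columns that cannot contain the optimum based on the monotone structure and a sampling of pivot cells. I would run this search on $X$ and on $Y$ separately, each yielding a candidate optimal side length, and then take the smaller of the two values for which the oracle still succeeds; correctness follows because $s^*$ is guaranteed to appear in one of the two matrices. Since each oracle call costs $O(k^{O(1)} n)$ after presorting, the total cost of the search is $O(\log n) \cdot O(k^{O(1)} n) = O(k^{O(1)} n \log n)$, which dominates the one-time $O(n\log n)$ presorting, giving the claimed bound.

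The main obstacle I expect is verifying the two structural ingredients precisely: first, that the optimal side length is genuinely realized as a coordinate difference of two input points (one must rule out or handle degenerate configurations where the binding constraint comes from a corner or an already-contained square, though these still reduce to coordinate differences), and second, that the implicit matrices $X$ and $Y$ are sorted in the precise sense required by \cite{FJ82} so that the $O(\log n)$-query guarantee applies. Both are standard but must be checked so that the decision oracle is called only at entries of these two matrices and the monotonicity-based pruning is valid; once these are in place, the running-time accounting is immediate.
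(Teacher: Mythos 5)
Your proposal is correct and follows essentially the same route as the paper: reduce to the $(4,k)$-stabbing decision oracle of the preceding section, observe that the optimal side length is a coordinate difference of two input points, and run Frederickson--Johnson matrix search over the two implicit sorted-difference matrices $X$ and $Y$ with $O(\log n)$ oracle calls, presorting once so each call costs $O(k^{O(1)} n)$. The paper's own argument is just a terser version of this same accounting.
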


\begin{theorem}
Given a set $P$ of $n$ points in $\b{R}^2$ and an integer $k \geq 0$, an optimal $(5,k)$-center of $P$ can be computed under the $\ell_\infty$-metric in $O(k^{O(1)} n \log^5 n)$ time.  
\end{theorem}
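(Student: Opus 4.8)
The plan is to reduce the $(5,k)$-center optimization problem to $O(\log n)$ invocations of the $(5,k)$-stabbing decision procedure developed in the previous subsection, each of which runs in $O(k^{O(1)} n \log^4 n)$ time. The crucial observation, exactly as in the $(4,k)$ case treated by the preceding theorem, is that the optimal side length for the $\ell_\infty$ metric must be realized as either the difference of the $x$-coordinates of two points of $P$ or the difference of the $y$-coordinates of two points. This gives only $O(n^2)$ candidate values, and I would organize them implicitly as the entries of two matrices $X$ and $Y$: sort $P$ by $x$-coordinate and let $X(i,j)$ be the signed difference of the $i$th and $j$th $x$-coordinates, and analogously for $Y$ using the $y$-ordering. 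These matrices are never materialized; each entry is computed on demand in $O(1)$ time, so the $O(n^2)$ candidates cost nothing to store.

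First I would establish the monotonicity of these matrices: with the rows and columns indexed by the sorted coordinate orders, the entries are monotone along rows and columns, which is precisely the structure required by the selection-and-search technique of Frederickson and Johnson~\cite{FJ82,FJ83,FJ84}. The decision procedure from Theorem~\ref{thm:rect4-dec}'s successor (the $(5,k)$-stabbing algorithm) serves as the oracle: given a candidate side length $r$, we build unit (side-$r$) squares centered at the points of $P$ and ask whether five congruent squares of side $r$ can cover all but $k$ points, equivalently whether the $(5,k)$-stabbing decision problem on those squares is feasible. This oracle is monotone in $r$ (feasibility is preserved as $r$ increases), so a single scalar search over each matrix suffices.

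Next I would run the Frederickson–Johnson search separately on $X$ and on $Y$. Their framework finds the smallest matrix entry for which a monotone predicate holds while invoking the predicate only $O(\log n)$ times and doing $O(n)$ additional work per invocation beyond the oracle call. Since each oracle call costs $O(k^{O(1)} n \log^4 n)$ by the $(5,k)$-stabbing decision result, the total over the $O(\log n)$ calls is $O(k^{O(1)} n \log^5 n)$. I would take the minimum of the two values returned by the searches on $X$ and on $Y$; because the optimal radius is guaranteed to appear in one of the two matrices, this minimum is exactly the optimal side length, and the accompanying stabbing configuration yields the five squares. Summing the costs and folding the presorting of $P$ (an $O(n \log n)$ one-time charge, dominated by the search) gives the claimed $O(k^{O(1)} n \log^5 n)$ bound.

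The main obstacle I anticipate is not the search mechanics but verifying the monotonicity of the oracle together with the matrix-entry ordering carefully enough that the Frederickson–Johnson machinery applies verbatim; in particular one must confirm that the candidate set (all pairwise coordinate differences in each axis) truly contains the optimal side length. The latter follows from the fact that in an optimal $\ell_\infty$ solution at least one square is ``tight'' in the sense that shrinking it would expose a previously covered point, forcing its side to equal a gap between two extreme covered points in one of the two axis directions. Making this tightness argument precise for the five-square case — ensuring that some square's critical dimension is pinned by two points of $P$ rather than by an interaction that escapes both matrices — is the one place demanding genuine care, but it is a direct adaptation of the identical reduction already used in the $(4,k)$-center theorem.
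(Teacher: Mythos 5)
Your proposal is correct and follows essentially the same route as the paper: both reduce the optimization to $O(\log n)$ calls of the $(5,k)$-stabbing decision algorithm (at $O(k^{O(1)} n \log^4 n)$ per call) via Frederickson--Johnson search over two implicit monotone matrices of pairwise $x$- and $y$-coordinate differences, taking the minimum feasible entry from either matrix. The paper states the candidate-set claim (that the optimal side length is a pairwise coordinate difference) without elaboration, so your added tightness justification is consistent with, and slightly more explicit than, the published argument.
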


\section*{Acknowledgements}
We thank Sariel Har-Peled for posing the problem and for several helpful discussions.  

\bibliographystyle{mystyle}
\bibliography{center-apx}

\end{document}